\newcommand{\U}{\mathcal{U}}
\newcommand{\upto}{\upharpoonright}  
\newcommand{\MLR}{\text{MLR}}
\newcommand{\dom}{\mathrm{dom}}
\newcommand{\halts}{\downarrow}
\newcommand{\univ}{U}
\newcommand{\cs}{2^{\omega}}
\newcommand{\str}{2^{<\omega}}
\newcommand{\llb}{\llbracket}
\newcommand{\rrb}{\rrbracket}
\newtheorem{thm}{Theorem}[section]
\newtheorem{lem}[thm]{Lemma}
\newtheorem{cor}[thm]{Corollary}
\newtheorem{prop}[thm]{Proposition}
\newtheorem{que}[thm]{Question}
\theoremstyle{definition}
\newtheorem{dfn}[thm]{Definition}
\theoremstyle{remark}
\newtheorem{rmk}[thm]{Remark}
\title{Kolmogorov Complexity and Generalized Length Functions}
\author{Cameron Fraize and Christopher P.\ Porter}
\begin{document}

\maketitle

\begin{abstract}  
Kolmogorov complexity measures the algorithmic complexity of a finite binary string $\sigma$ in terms of the length of the shortest input $\sigma^*$ to a fixed universal machine that yields $\sigma$ as output. In this article, we study a modification of Kolmogorov complexity by replacing the notion of the length of input with what we refer to as a \emph{generalized length function}.  We focus in particular on length functions defined in terms of a computable real number $\alpha\geq 1$ such that for a fixed $i\in\{0,1\}$, the length of $i$ is 1 and the length of $1-i$ is $\alpha$ (and extends this to arbitrary finite strings); such a length function is referred to as an $\alpha$-\emph{length} function.  As we will see, the class of $\alpha$-length functions for computable $\alpha\geq 1$ is closely related to the family of computable Bernoulli measures.  We study randomness with respect to Bernoulli measures through the lens of this family of length functions, proving a generalization version of the classic Levin-Schnorr theorem that involves $\alpha$-length functions and then proving subsequent results that involve effective dimension and entropy, including a generalization of the effective Shannon-McMillan-Breiman theorem for sequences that are random with respect to a computable Bernoulli measure.

\end{abstract}

\section{Introduction}

According to the Levin-Schnorr theorem, a sequence $X$ is Martin-L\"of random if and only if the initial segments of $X$ are sufficiently incompressible; that is,  there is some $c$ such that for all $n$,
$K(X\upto n)\geq n-c$, where $K(\sigma)$ is the prefix-free Kolmogorov complexity of a binary string $\sigma$.  This expression says that the length of the shortest input into a fixed universal prefix-free Turing machine that yields $X\upto n$  is at least $n-c$.

The motivating question for this study is this:  Does the Levin-Schnorr theorem still hold if we modify the notion of length in terms of which $K$ is defined?  Here we examine a specific family of computable length functions defined as follows:  For computable $\alpha\in\mathbb{R}$ with $\alpha\geq 1$, $i\in\{0,1\}$, and $\sigma\in\str$, we set $\ell^i_\alpha(\sigma)=\#_{1-i}(\sigma)+\alpha\cdot\#_i(\sigma)$; informally, we can think of such a length function as being determined by the cost of individual bits, so that the bit $i\in\{0,1\}$ has cost $\alpha$ while the bit $1-i$ has cost $1$.  

The main finding of this paper is that an analogue of the Levin-Schnorr theorem holds for all such length functions. Moreover, this analogue provides a characterization of Bernoulli randomness with respect to some computable Bernoulli parameter $p$ (which we define in Section \ref{sec-background} below) in terms of what we call $\alpha$-complexity, written $K^{(\alpha)}$, an analogue of Kolmogorov complexity that uses one of the length functions $\ell^i_\alpha$ for $i\in\{0,1\}$.  In the course of establishing our main result, we also provide a generalization of the KC theorem (also referred to as the machine existence theorem in \cite{Nie09}), obtained by a proof that is simpler than the standard proofs of the KC theorem in \cite{Nie09} and \cite{DowHir10}.  Lastly, by means of our generalization of the Levin-Schnorr theorem and related auxiliary results, we generalize an effective version of the Shannon-McMillan-Breiman theorem, which involves the effective Hausdorff dimension of Bernoulli random sequences.  The upshot of these results is that, in the context of algorithmic randomness, there is an interesting and fruitful connection between computable probability measures on the one hand and various computable length functions on the other, one that merits further study in future work.

The contents of this paper are as follows.  In Section \ref{sec-background}, we introduce the requisite background in computability theory and algorithmic randomness.  In Section \ref{sec-glf}, we introduce the concept of a generalized length function and define a variant of Kolmogorov complexity in terms of generalized length functions, focussing specifically on the $\alpha$-functions described above and the corresponding notion of $\alpha$-complexity.



Next, in Section \ref{sec-MLR}, we study possible generalizations of the classic Levin-Schnorr theorem in terms of generalized length functions.  For computable $\alpha,\beta\geq 1$, we arrive at a characterization of a class of sequences that are Martin-L{\"o}f random with a respect to a Bernoulli measure with parameter $p$ satisfying $p^\beta=1-p$ in terms of $\alpha$-complexity of their initial segments, modulo a multiplicative constant that accounts for the differences between $\alpha$-length and $\beta$-length.  We also show that the above-mentioned multiplicative constant in our generalization of the Levin-Schnorr theorem is necessary by showing that there are sequences $X$ such that (i) the initial segments of $X$ have $\alpha$-complexity above a threshold that does not include the multiplicative constant but (ii) are not random with respect to any computable measure.

Lastly, in Section \ref{sec-dim}, we modify the notions of effective packing dimension and effective Hausdorff dimension using $\alpha$-complexity and proof the above-mentioned generalization of the effective Shannon-McMillan-Breiman theorem for sequences that are random with respect to a computable Bernoulli measure.

\bigskip

\section{Background}\label{sec-background}
 
\subsection{Notation}We fix the following notation and terminology.  We denote the set of natural numbers by $\omega$ and the set of finite binary strings by $2^{<\omega}$, with $\epsilon$ denoting the empty string.  We shall use lowercase letters such as $n,m$ to denote natural numbers, and lowercase Greek letters such as $\sigma$ and $\tau$ to denote binary strings.  All logarithms without subscripts (i.e. $\log x$ for $x>0$) will be base 2, unless otherwise stated.  We use $2^\omega$ to denote \textit{Cantor space}, the set of infinite binary sequences, and use capital letters such as $X$ and $Y$ to denote such sequences.  The set of non-negative dyadic rationals, of the form $m/2^n$ for $m,n\in\omega$ is denoted $\mathbb{Q}_2$.  If $X\in 2^\omega$ and $n\in\omega$, then $X\upto n$ is the first $n$ bits of $X$, and $X(n)$ is the $(n+1)^{\text{st}}$ bit of $X$.  If $\sigma$ and $\tau$ are binary strings, then $\sigma\preceq\tau$ means that $\sigma$ is an initial segment of $\tau$, i.e. $\tau\upto|\sigma|=\sigma$.  Similarly, for $X\in 2^\omega$, $\sigma\prec X$ means that $\sigma$ is an initial segment of $X$, and the \textit{cylinder set} $\llbracket\sigma\rrbracket$ is the set of all $X\in 2^\omega$ such that $\sigma\prec X$.  For strings $\sigma$ and $\tau$, $\sigma^\smallfrown\tau$ is the concatenation of $\sigma$ and $\tau$.  Given $\sigma$, $\#_0(\sigma)$ is the number of 0's in $\sigma$, and $\#_1(\sigma)$ is the number of $1$'s in $\sigma$.  If $S\subseteq 2^{<\omega}$, then we let $\llbracket S\rrbracket$ denote $\cup_{\sigma\in S}\llbracket\sigma\rrbracket$.  The cylinder sets form a basis for the usual topology on Cantor space (the product topology), and so the open sets in $2^\omega$ are of the form $\llbracket S\rrbracket$ for $S\subseteq 2^{<\omega}$.  An open set $\U$ is said to be \textit{effectively open} (or $\Sigma^0_1$) if there is a computably enumerable (hereafter, c.e.) set $S\subseteq 2^{<\omega}$ such that $\U=\llbracket S\rrbracket$.  A sequence $\{\U_n\}_{n\in\omega}$ is said to be \textit{uniformly} $\Sigma^0_1$ if there exists a sequence $\{S_n\}_{n\in\omega}$ of uniformly c.e. sets such that $\U_n=\llbracket S_n\rrbracket$.  

We assume that the reader is familiar with the basics of computability theory and algorithmic randomness, i.e., computable functions, prefix-free machines, universal machines, etc.  See, for instance,
 Nies~\cite[Ch.~1]{Nie09} or Downey and Hirschfeldt~\cite[Ch.~2]{DowHir10}.

\subsection{Computable measures on $\cs$} Recall that a measure $\mu$ on $\cs$ is computable if there is a computable function $f:2^{<\omega}\times\omega\to\mathbb{Q}_2$ such that
$|\mu(\llbracket\sigma\rrbracket)-f(\sigma,i)|\leq 2^{-i}$.  For a prefix-free $V\subseteq\str$ (i.e., for $\sigma\in V$, if $\sigma\prec\tau$, then $\tau\notin V$), we set $\mu(\llb V\rrb)=\sum_{\sigma\in V}\mu(\sigma)$.  Hereafter, we denote $\mu(\llbracket\sigma\rrbracket)$ by $\mu(\sigma)$ for strings $\sigma$, and $\mu(\llbracket V\rrbracket)$ by $\mu(V)$ for $V\subseteq 2^{<\omega}$.  We also denote the \textit{Lebesgue measure} by $\lambda$, where $\lambda(\sigma)=2^{-|\sigma|}$ for every string $\sigma\in\str$.

Given $p\in(0,1)$, the \textit{Bernoulli} $p$\textit{-measure on} $2^\omega$ is defined by $\mu_p(\sigma)=p^{\#_1(\sigma)}(1-p)^{\#_{0}(\sigma)}$ for each $\sigma\in\str$. (Note that $\lambda$ is the Bernoulli $(1/2)$-measure.)  A Bernoulli $p$-measure is computable if and only if $p$ is a computable real number.

\subsection{Martin-L\"of randomness and Kolmogorov complexity} Recall that for a computable measure $\mu$ on $2^\omega$, a \textit{$\mu$-Martin-L{\"o}f test} is a uniformly $\Sigma^0_1$ sequence $\{\U_i\}_{i\in\omega}$ such that $\mu(\U_i)\leq 2^{-i}$ for every $i\in\omega$.  A sequence $X\in 2^\omega\!$ \textit{passes} a $\mu$-Martin-L{\"o}f test $\{\U_i\}_{i\in\omega}$ if $X\not\in\cap_{i\in\omega} \U_i$, and $X$ is called \textit{$\mu$-Martin-L{\"o}f random}, written $X\in\MLR_\mu$, if $X$ passes every $\mu$-Martin-L{\"o}f test.

We can characterize $\mu$-Martin-L\"of randomness for any computable measure $\mu$ in terms of Kolmogorov complexity.  Recall that the {\em prefix-free Kolmogorov complexity} of a string $\sigma \in \str$ is defined as the length of the shortest program producing $\sigma$ by a fixed universal, prefix-free machine $U$, that is, as $K(\sigma)=\min\{|\tau|\colon \tau \in \str \;\&\; U(\tau)=\sigma\}.$  The relationship  between Martin-L\"of randomness and Kolmogorov complexity is given by the following:
\begin{thm}[Levin \cite{Lev74}, Schnorr; see \cite{Cha75}]
A sequence $X\in 2^\omega\!$ is $\mu$-Martin-L{\"o}f random iff there is some $c\in\omega$ such that for all $n\in\omega$,
\[
K(X\upto n)\geq -\log(\mu(X\upto n))-c.
\]
\end{thm}
Note that in the case that $\mu=\lambda$, we have that $X$ is Martin-L\"of random if and only if there is some $c\in\omega$ such that $K(X\upto n)\geq n-c$ for all $n\in\omega$.

\subsection{Notions of effective dimension}  The following dual concepts measure the density of information in binary sequences.  For $X\in 2^\omega\!$, the \textit{effective Hausdorff dimension of} $X$ is 
$\liminf_{n\to\infty} \frac{K(X\upto n)}{n}$, while the \textit{effective packing dimension of} $X$ is
$\limsup_{n\to\infty} \frac{K(X\upto n)}{n}$.

It follows from the Levin-Schnorr theorem that for any $X\in\MLR$,
\[
\liminf_{n\to\infty} \frac{K(X\upto n)}{n}=\limsup_{n\to\infty} \frac{K(X\upto n)}{n}=\lim_{n\to\infty} \frac{K(X\upto n)}{n}=1.
\]
Hoyrup \cite{Hoy12} generalized this result to a wide class of computable measures, namely the shift-invariant computable measures (which we will not define here). In particular, it follows from Hoyrup's result that for every computable Bernoulli measure $\mu_p$ (i.e., every Bernoulli measure defined in terms of a computable parameter $p\in(0,1)$) and every $X\in \MLR_{\mu_p}$,
\[
\lim_{n\rightarrow\infty}\frac{K(X\upto n)}{n}=h(p),
\]
where $h(p)=-p\log(p)-(1-p)\log(1-p)$ is the entropy of the measure $\mu_p$.

\section{Generalized length functions and Kolmogorov complexity}\label{sec-glf}

In this section, we introduce the concept of a generalized length function and a corresponding modification of Kolmogorov complexity.

\begin{dfn}
Let $\ell:2^{<\omega}\to\mathbb{R}^{\geq 0}$ be a function.

\begin{enumerate}[(i)]
\item $\ell$ is called a \textit{generalized length function}, or \textit{g.l.f.}, if $\ell$ is computable, $\ell(\epsilon)=0$, and for $\sigma,\tau\in\str\!$, if $\sigma\prec\tau$, then $\ell(\sigma)<\ell(\tau)$;
\item $\ell$ is a \textit{sub-additive g.l.f.} if $\ell(\sigma\tau)\leq\ell(\sigma)+\ell(\tau)$ for every $\sigma,\tau\in\str$;
\item $\ell$ is \textit{additive} if equality holds for in the above condition.
\end{enumerate}
\end{dfn}

For a g.l.f.\ $\ell$, a prefix-free machine $M$, and a string $\sigma\in\mathrm{ran}(M)$, the \textit{prefix-free $\ell$-Kolmogorov complexity of} $\sigma$ \textit{relative to} $M$ is defined to be
\[
K^{(\ell)}_M(\sigma)=\{\ell(\tau):M(\tau)\halts=\sigma\},
\]
where $K_M^{(\ell)}(\sigma)=\infty$ if $\sigma\notin\mathrm{ran}(M)$.  Given a prefix-free universal machine $\univ$, the \textit{$\ell$-Kolmogorov complexity of $\sigma$} is $K^{(\ell)}=K^{(\ell)}_\univ$.

If a g.l.f.\ $\ell$ is sub-additive, we can prove that the invariance theorem holds for $K^{(\ell)}$.

\begin{prop}[Invariance]
For any sub-additive g.l.f.\ $\ell$ and any prefix-free machine $M$, for all $\sigma$, $K^{(\ell)}(\sigma)\leq K_M^{(\ell)}(\sigma)+O(1)$.
\end{prop}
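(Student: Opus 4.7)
The plan is to adapt the textbook invariance proof for ordinary prefix-free Kolmogorov complexity, with sub-additivity of $\ell$ replacing the identity $|\rho^\frown \tau|=|\rho|+|\tau|$ that is used implicitly in the standard argument. The fixed universal prefix-free machine $\univ$ satisfies $\univ(0^e 1^\frown \sigma)=M_e(\sigma)$ for any $\sigma \in \dom(M_e)$, and this explicit form is what makes the argument go through.

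First I would fix the prefix-free machine $M$ and choose an index $e$ with $M=M_e$ from the enumeration $\{M_e\}_{e\in\omega}$, so that the coding constant is $\rho:=0^e 1$. For any $\sigma$ with $K_M^{(\ell)}(\sigma)<\infty$, pick a witness $\tau$ with $M(\tau)\halts=\sigma$ and $\ell(\tau)=K_M^{(\ell)}(\sigma)$. Since $\univ(\rho^\frown\tau)=M(\tau)=\sigma$, the definition of $K^{(\ell)}=K^{(\ell)}_\univ$ immediately gives $K^{(\ell)}(\sigma)\leq \ell(\rho^\frown\tau)$. The case in which $\sigma$ is not in the range of $M$ is vacuous since $K_M^{(\ell)}(\sigma)=+\infty$ by convention.

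The key step is then to invoke sub-additivity of $\ell$, which yields
\[
\ell(\rho^\frown\tau)\;\leq\;\ell(\rho)+\ell(\tau)\;=\;\ell(\rho)+K_M^{(\ell)}(\sigma).
\]
Setting $c:=\ell(\rho)=\ell(0^e 1)$, which depends only on $M$ (via its index $e$) and not on $\sigma$, one obtains the desired bound $K^{(\ell)}(\sigma)\leq K_M^{(\ell)}(\sigma)+c$.

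I do not expect a genuine obstacle here; the proof is essentially a one-line computation once the coding-constant machinery from the fixed construction of $\univ$ is in place. The only conceptual point worth highlighting is that sub-additivity is \emph{exactly} what is needed: without it, there is no way to control $\ell(\rho^\frown\tau)$ in terms of $\ell(\tau)$ plus an $M$-dependent constant, and so invariance can fail for arbitrary g.l.f.'s. This is a good opportunity to foreshadow why subsequent sections restrict attention to sub-additive (and in particular additive $k$-length) functions.
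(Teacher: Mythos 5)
Your proof is correct and follows essentially the same route as the paper's: fix the coding constant $\rho_M$, take an optimal witness $\tau$ for $K_M^{(\ell)}(\sigma)$, and apply sub-additivity to bound $\ell(\rho_M{}^\smallfrown\tau)\leq\ell(\rho_M)+\ell(\tau)$. The only cosmetic difference is that the paper selects the witness as the $\ell$-length-lexicographically least such string, which changes nothing in the argument.
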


\noindent As this proof follows in the same way as the proof of the invariance theorem for $K$, we leave the details to the reader.

%
%
%
%

\subsection{$\alpha$-length functions}\label{subsec-bern}

One specific family of g.l.f.'s that we will study here consists of what we call the \emph{$\alpha$-length functions}. For $\alpha\in\mathbb{R}$ with $\alpha\geq 1$ and $i\in\{0,1\}$, let $\ell^i_\alpha$ be the function defined by
\[
\ell^i_\alpha(\sigma)=\#_{1-i}(\sigma)+\alpha\cdot\#_i(\sigma)
\]
for $\sigma\in\str$. Hereafter, we set the convention that $\ell_\alpha=\ell_\alpha^0$ and will refer to $\ell_\alpha(\sigma)$ as the \emph{$\alpha$-length} of $\sigma$.  (All results involving $\ell_\alpha^0$ established here also hold for $\ell_\alpha^1$ taking the appropriate symmetries into account.)  We will also write $K^{(\ell_\alpha)}(\sigma)$ as $K^{(\alpha)}(\sigma)$ and will refer to this as the \emph{$\alpha$-complexity} of $\sigma$.

The reason we study is particular class of g.l.f.'s is that the class of Bernoulli $p$-measures is closely tied to the family of $\alpha$-length functions.  Let $p\in[\frac{1}{2},1)$, and set $\alpha=\log(1-p)/\log(p)$, so that $p^\alpha=1-p$.  Then for $\sigma\in\str$, we have
\[
\lambda_p(\sigma)=p^{\#_1(\sigma)}(1-p)^{\#_0(\sigma)}=p^{\#_1(\sigma)}p^{\alpha\cdot\#_0(\sigma)}=p^{\ell_\alpha(\sigma)}.
\]
Thus we have a generalization of the equation $\lambda(\sigma)=(\frac{1}{2})^{|\sigma|}$ for all Bernoulli measures.  Hereafter, we will focus on computable $p\in(0,1)$ (and hence computable values $\alpha=(\log(p)/\log(1-p))$).  Note also that if $p< 1/2$, then we simply work with $\ell^1_\alpha$ instead of $\ell^0_\alpha$, so that all of the results obtained below for the case that $p\geq 1/2$ still hold.

In the case that $\alpha\in\omega$, we use the variable $k$ instead of $\alpha$ and refer to the corresponding length functions as $k$-length functions.  Note that if $p^k=1-p$, then $p$ is the unique solution in $(0,1)$ to the equation $f_k(x)=x^k+x-1$.  In particular, if $k=2$, then the corresponding solution $p\in(0,1)$ to $f_2(x)=x^2+x-1$ is $\phi-1$, where $\phi=\frac{1+\sqrt{5}}{2}$ is the golden mean.

\subsection{Some inequalities involving $\alpha$-length}

For $\alpha\geq 1$ we are interested in determining the number of strings of a certain $\alpha$-length for a fixed $\alpha\geq 1$.  For every $\alpha\geq 1$, let $S_\alpha^n=\{\sigma\in\str\colon\ell_\alpha(\sigma)\in[n,n+1)\}$, and let $s^n_\alpha=|S^n_\alpha|$.  In general, it is difficult to find an exact expression to calculate the values $(s^n_\alpha)_{n\in\omega}$, with an exception in the case of $k$-length:  For $k=2$, the sequence $(s^n_2)_{n\in\omega}$ is the sequence of Fibonacci numbers, and for integers $k>2$, the sequence $(s^n_k)_{n\in\omega}$ is the sequence of so-called $k$-Fibonacci numbers (where $s^n_k=1$ for $n<k$ and $s^n_k=s^{n-1}_k + s^{n-k}_k$ for $n\geq k$).

\begin{prop}\label{prop-bound1}
Let $p\in(0,1)$ and let $\alpha$ satisfy $p^\alpha=1-p$.  Then for every $n\in\omega$,
\[
\mu_p(S^n_\alpha)\geq 1-p.
\]
\end{prop}

\begin{proof}
Suppose for the sake of contradiction that $\mu_p(S^n_\alpha) < 1- p$ for some n.  Since $S^n_\alpha$ is finite, there is a finite set of strings $D\subseteq\str$ such that  $\cs\setminus\llb S^n_\alpha\rrb =\llb D\rrb$ and hence $\mu_p(\llb D\rrb)>p$.  We can further assume that the collection $D$ is prefix-free and that every proper initial segment of each $\tau\in D$ extends to some $X\in\llb S^n_\alpha\rrb$.

We make several observations.  First, note that no initial segment of any $\tau\in D$ is in $S^n_\alpha$, as the set of all extensions of strings in $D$ is disjoint from the extensions of strings in $S^n_\alpha$.  Second, for each $\tau\in D$, $\ell_\alpha(\tau)\geq n+1$.  Indeed, suppose that $\ell_\alpha(\tau)< n+1$.  We cannot have $\ell_\alpha(\tau)\geq n$, or else we would have $\tau\in S^n_\alpha$.  However, if $\ell_\alpha(\tau)<n$, then there is some $k\in\omega$ such that $\tau^\frown 0^k\in S^n_\alpha$, which contradicts our assumption that $\llb D\rrb$ and $\llb S^n_\alpha\rrb$ are disjoint.

Finally, the key observation is that each $\tau\in D$ ends with a 0.  For if there is some $\tau\in D$ that ends with a 1, then dropping this final 1 gives a string $\tau^-$ such that $\ell_\alpha(\tau^-)\geq n$ (since $\ell_\alpha(\tau)\geq n+1$) which extends to a string in $S^n_\alpha$ by our initial assumption about the strings in $D$.  But this implies that $\tau^-\in S^n_\alpha$, which contradicts the fact that no initial segment of $\tau$ can be in $S^n_\alpha$.

Now,  consider the set $D'=\{\tau^-\colon \tau\in D\}$, which is prefix-free since $D$ is prefix-free.  Then
\[
\mu_p\left(\bigcup_{\tau\in D}\llb\tau\rrb\right)=\sum_{\tau\in D}\mu_p(\tau)=\sum_{\tau\in D}(1-p)\mu_p(\tau^-)=(1-p)\sum_{\tau^-\in D'}\mu_p(\tau^-)\leq 1-p.
\]
But $\mu_p\left(\bigcup_{\tau\in D}\llb\tau\rrb\right)> p$, which, combined with the above inequality implies that $1-p > p$, a contradiction.

\end{proof}

The other inequality involving the size of the sets $S^n_\alpha$ is the following:

\begin{prop}\label{prop-bound2}
For all $\alpha\geq 1$ and $n\in\omega$, $s_\alpha^n\leq p_\alpha^{-n}$.
\end{prop}

\begin{proof}
In the case that $\alpha=1$, $p_\alpha=\frac{1}{2}$ and the result clearly follows, so we assume that $\alpha>1$.  Note that by our assumption that $p_\alpha> 1-p_\alpha$, it follows that $1\leq p_\alpha^{-1}\leq 2$.  Now for $n=0$ we have $s_\alpha^0=1=p_\alpha^0$.  Moreover, since $\alpha> 1$, $S^1_\alpha=\{1\}$, and so $s^1_\alpha\leq p^{-1}_\alpha$.   In this case, it follows that $\mu_{p_\alpha}(S^1_\alpha)=p_\alpha$, from which it follows that $\mu_p(S^n_\alpha)<p_\alpha$ for all $n\geq 2$ (as the sets $S^n_\alpha$ are disjoint for $n\in\omega$).

 Assume for the sake of contradiction that there is some $n>1$ such that $s^n_\alpha>p_\alpha^{-n}$.  Then since for every $\sigma\in S^n_\alpha$ we have $\mu_{p_\alpha}(\sigma)=p_\alpha^{\ell_\alpha(\sigma)}\geq p_\alpha^{n+1}$, we have
 \[
 \mu_{p_\alpha}(S^n_\alpha)=\sum_{\sigma\in S^n_\alpha}\mu_{p_\alpha}(\sigma)\geq s^n_\alpha\cdot p_\alpha^{n+1}\geq p_\alpha^{-n}p_\alpha^{n+1}=p_\alpha,
\]
a contradiction.  Thus the conclusion follows.
\end{proof}

\subsection{Generalizing the KC Thereom}

The bounds established in Propositions \ref{prop-bound1} and \ref{prop-bound2}  above are particularly useful in generalizing other results involving $K$ to $K^{(\alpha)}$, notably the KC theorem and the minimality of $K$ as an information content measure.  Let us consider each of these results in turn.


\begin{thm}[$\alpha$-KC Theorem]
Let $\alpha\geq 1$ be computable and let $\{(r_i,\tau_i)\}_{i\in\omega}$ be a computable sequence of pairs (called requests) with $r_i\in\omega$ and $\tau_i\in 2^{<\omega}$ for every $i$, such that $\sum_{i\in\omega} p_\alpha^{r_i}\leq 1-p_\alpha$. Then there exists a prefix-free machine $M$ and sequence $\{\sigma_i\}_{i\in\omega}$ of strings with $\ell_\alpha(\sigma_i)\in [r_i,r_i+1)$ such that $\dom(M)=\{\sigma_i:i\in\omega\}$ and $M(\sigma_i)=\tau_i$ for every $i$.  Furthermore, one can obtain an index for $M$ effectively from an index of the sequence of requests.
\end{thm}

\bigskip

Note that the bounded sum condition imposed on the requests differs from the bound in the original KC theorem: instead of requiring $\sum_{i\in\omega} p_\alpha^{r_i}\leq 1$, we have $\sum_{i\in\omega} p_\alpha^{r_i}\leq 1-p_\alpha$.  To see why, first note that, following the measure-theoretic proof given in  \cite{Nie09}, we can always construct effectively a prefix-free machine $M$ and sequence of strings $\{\sigma_i\}_{i\in\omega}$ under the condition that $\sum_{i\in\omega} 2^{-|\sigma_i|}=\sum_{i\in\omega} 2^{-r_i}\leq 1$.  


This approach no longer works in the general case of $\alpha$-length. For example, let $\alpha=2$ and consider the finite set of requests $\{(2,\tau_0),(4,\tau_1),(4,\tau_2),(4,\tau_3),(4,\tau_4)\}$ 
for some strings $\tau_0,\dotsc,\tau_4\in\str$. We can compute that 
\[
\sum_{i=0}^4 p_2^{r_i}=\sum_{i=0}^4 \phi^{-r_i}\leq 1,
\]
where, recall, $\phi$ is the golden mean.  However, there is no prefix-free set $\{\sigma_0,\ldots,\sigma_4\}$ fulfilling these requests.  This is because, as can be easily verified, once we will fulfill the request of a string of 2-length 2, there are only three strings of 2-length 4 available for additional requests.

We resolve the problem of requesting too much measure by ensuring that our request sets never cover more than any set of the form $S^n_k$ for any $n\in\omega$.  Indeed, for every $\alpha\geq1$ and $n\in\omega$, according to Proposition \ref{prop-bound1} we have $\mu_{p_\alpha}(S^n_\alpha)\geq 1-p_\alpha$, which is equal to the bound on requests in the statement of the $\alpha$-KC theorem.

%

Lastly, note that in the case that $\alpha=1$, the above statement of the $\alpha$-KC theorem differs from the original KC theorem by imposing the bound 1/2 on the requests rather than 1.  As we will see, using this bound allows us to prove the 
result with a proof that is considerably simpler than that of the original KC theorem.  Indeed, in the standard proofs of the KC theorem, one has to carefully choose each string to fulfill each request (choosing the leftmost available string of the requested length); however, if we use the bound 1/2, then we are guaranteed that we can choose \emph{any} available string that satisfies a given request (which only comes at a cost of increasing the length of each request in the original KC set by one bit).  This follows from the following general lemma.

\begin{lem}\label{KClem}
For $\alpha\geq 1$, suppose that $\{\sigma_0,\ldots,\sigma_n\}$ is a prefix-free set such that $r=\sum_{i=0}^n p_\alpha^{\ell_\alpha(\sigma_i)}\leq 1-p_\alpha$.  Then for every $j\in\omega$, if
$p_\alpha^j+r\leq 1-p_\alpha$, then there is some $\tau\in 2^{<\omega}$ such that $\ell_\alpha(\tau)\in[j,j+1)$ and $\{\sigma_1,\ldots,\sigma_n,\tau\}$ is prefix-free.
\end{lem}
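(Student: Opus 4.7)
The plan is to locate the desired $\tau$ inside $S_k^j$, the set of all strings of $k$-length $j$, by showing that the subset of $S_k^j$ consisting of strings compatible with some $\sigma_i$ is strictly smaller than $S_k^j$ itself. Writing $T' = \{\tau \in S_k^j : \tau \text{ is compatible with some } \sigma_i\}$, the goal reduces to proving $|T'| < s_k^j$, since then any element of $S_k^j \setminus T'$ supplies the desired $\tau$. The three main ingredients are the lower bound $s_k^j \geq q_k^{j-k}$ from Theorem~\ref{thm-k-bounds}, the pointwise upper bound $s_k^m \leq q_k^m$ from Proposition~\ref{prop-another-upper-k-bound}, and the budget hypothesis $r + q_k^{-j} \leq q_k^{-k}$.

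I would split $T'$ into $T'_a = \{\tau \in T' : \sigma_i \preceq \tau \text{ for some } i\}$ and $T'_b = T' \setminus T'_a$, consisting of those $\tau \in S_k^j$ that are properly extended by some $\sigma_i$ but extend none. For $T'_a$, the prefix-freeness of $\{\sigma_i\}$ ensures that each such $\tau$ extends a uniquely determined $\sigma_i$ with $\ell_k(\sigma_i) \leq j$, and $\tau = \sigma_i\rho$ with $\ell_k(\rho) = j - \ell_k(\sigma_i)$; there are exactly $s_k^{j-\ell_k(\sigma_i)}$ such $\rho$. Applying $s_k^m \leq q_k^m$ and the budget hypothesis gives
\[
|T'_a| \;=\; \sum_{i\,:\,\ell_k(\sigma_i) \leq j} s_k^{j - \ell_k(\sigma_i)} \;\leq\; q_k^j \sum_i q_k^{-\ell_k(\sigma_i)} \;=\; q_k^j\, r \;\leq\; q_k^j(q_k^{-k} - q_k^{-j}) \;=\; q_k^{j-k} - 1,
\]
which, combined with $s_k^j \geq q_k^{j-k}$, yields $|T'_a| < s_k^j$ immediately.

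The main obstacle is controlling $T'_b$. Since both $S_k^j$ and $\{\sigma_i\}$ are prefix-free, no single $\sigma_i$ can properly extend two incomparable elements of $S_k^j$, which gives an injection $T'_b \hookrightarrow \{\sigma_i : \ell_k(\sigma_i) > j\}$; but this injection on its own is not strong enough to keep $|T'_a| + |T'_b|$ below $s_k^j$ in full generality. The cleanest way to finish — and, I believe, the intended setting of the lemma — is to apply it inside the proof of the $k$-KC theorem after processing requests in non-decreasing order of $k$-length: then every already-chosen $\sigma_i$ satisfies $\ell_k(\sigma_i) \leq j$ by the time the request of $k$-length $j$ is handled, so $T'_b = \emptyset$ automatically and the $T'_a$ bound above closes the argument, with any $\tau \in S_k^j \setminus T'_a$ satisfying the conclusion.
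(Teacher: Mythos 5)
Your decomposition of the ``blocked'' strings of $S_k^j$ into $T'_a$ (those extending some $\sigma_i$) and $T'_b$ (those properly extended by some $\sigma_i$) is the right way to see the problem, and your bound on $T'_a$ is a correct counting reformulation of what the paper actually proves: the paper negates the conclusion as ``every $\tau$ with $\ell_k(\tau)=j$ has some $\sigma_t\preceq\tau$,'' covers $\llbracket S_k^j\rrbracket$ by $\bigcup_i\llbracket\sigma_i\rrbracket$, and derives $r\geq\lambda_k(\llbracket S_k^j\rrbracket)\geq q_k^{-k}$, contradicting $q_k^{-j}+r\leq q_k^{-k}$. Multiplying your chain $|T'_a|\leq q_k^jr\leq q_k^{j-k}-1<s_k^j$ through by $q_k^{-j}$ recovers exactly that measure computation (note the hypothesis forces $j\geq k$, so Theorem~\ref{thm-k-bounds} applies). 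In other words, on the part of the statement that the paper's proof addresses, you and the paper agree.

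The part you flag as an obstacle is a genuine gap, and it is a gap in the paper as well: the paper's negation silently treats ``$\{\sigma_0,\dots,\sigma_n,\tau\}$ is not prefix-free'' as ``some $\sigma_t\preceq\tau$,'' ignoring the case $\tau\prec\sigma_t$, i.e.\ ignoring $T'_b$ entirely. No budget argument can control $T'_b$, because a single deep $\sigma_i$ below $\tau$ blocks $\tau$ while contributing only $q_k^{-\ell_k(\sigma_i)}\ll q_k^{-j}$ to $r$; indeed the lemma as stated fails. For $k=1$ take $\sigma_0=0000$, $\sigma_1=0100$, $\sigma_2=1000$, $\sigma_3=1100$ and $j=2$: then $r=4\cdot 2^{-4}=1/4\leq q_1^{-1}$ and $q_1^{-2}+r=1/2\leq q_1^{-1}$, yet every string of length $2$ is a proper prefix of one of the $\sigma_i$, so no $\tau$ of length $2$ keeps the set prefix-free. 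Your proposed repair --- handle requests in non-decreasing order of $k$-length so that $T'_b=\emptyset$ --- does prove the correct weaker lemma (with the added hypothesis $\ell_k(\sigma_i)\leq j$ for all $i$), but it is not available in the intended application: the requests of the $k$-KC theorem arrive as an infinite computable sequence in arbitrary order, and since a small request may appear arbitrarily late one cannot computably re-enumerate them by $k$-length. So your proof is incomplete, but the incompleteness is inherited from the statement itself; repairing the $k$-KC theorem requires either this reordering restriction together with a different handling of late small requests, or a return to a leftmost-interval style construction.
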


\begin{proof}
Suppose not, so that there is some $j\in\omega$ such that $p_\alpha^j+r\leq 1-p_\alpha$ but for all strings $\tau$, if $\ell_\alpha(\tau)\in[j,j+1)$, then there is some $0\leq t\leq n$ such that $\sigma_t\preceq\tau$.  But then since $\llbracket S_\alpha^j\rrbracket\subseteq\bigcup_{i=0}^n\llbracket\sigma_i\rrbracket$, we have
\[
r=\sum_{i=0}^n p_\alpha^{\ell_\alpha(\sigma_i)}=\mu_{p_\alpha}\left(\bigcup_{i=0}^n\llbracket\sigma_i\rrbracket\right)\geq\mu_{p_\alpha}(\llbracket S_\alpha^j\rrbracket)\geq 1-p_\alpha,
\]
where the final inequality is given by Proposition \ref{prop-bound1}.  But then ${p_\alpha}^j+r\geq {p_\alpha}^j+1-p_\alpha>1-p_\alpha$, a contradiction.
\end{proof}

\begin{proof}[Proof of the $\alpha$-KC theorem]
Given a computable sequence $\{(r_i,\tau_i)\}_{i\in\omega}$ of requests such that $\sum_{i\in\omega} q_k^{-r_i}\leq q_k^{-k}$, we define a sequence $(\sigma_i)_{i\in\omega}$ and a prefix-free machine $M$ by recursion.  Let $\sigma_0=1^{r_0}$, so that $\ell_\alpha(\sigma_0)=r_0$.  Given $n\in\omega$, if $\sigma_0,\ldots,\sigma_n$ are all defined, search for a string $\tau$ such that $\ell_\alpha(\tau)\in[r_{n+1},r_{n+1}+1)$ and $\sigma_i\not\preceq\tau$ for any $1\leq i\leq n$.  This process terminates since such a string $\tau$ exists by Lemma \ref{KClem}, and the process is computable since $\ell_\alpha$ is computable.  Set $\sigma_{n+1}=\tau$ and define $M(\sigma_{n+1})=\tau_{n+1}$. This concludes the construction.
\end{proof}

Just as with standard complexity, we can define a notion of information content measure for $K^{(\alpha)}$ and generalize the fact that for any information content measure $F$, there is a $c$ such that for any $\sigma\in\dom(F)$, $K(\sigma)\leq F(\sigma)+c$.  This will be especially useful in giving sufficient conditions for $\mu_{p_\alpha}$-Martin-L{\"o}f randomness in Section \ref{sec-MLR}.

\begin{dfn}
An \emph{$\alpha$-information content measure} (hereafter an \emph{$\alpha$-i.c.m.}) is a partial map $F:2^{<\omega}\to\mathbb{R}^{\geq 0}$ such that $\sum_{\sigma\in\dom(F)} p_\alpha^{F(\sigma)}\leq 1$ and the set $\{(\sigma,m):F(\sigma)< m\}$ is c.e.
\end{dfn}

\begin{rmk}
Unlike the definition of a standard i.c.m., in defining an $\alpha$-i.c.e.\ $F$, we use a strict inequality in the condition that the set $\{(\sigma,m):F(\sigma)< m\}$ is c.e. (rather than $\leq$), since in this more general setting, $F$ is not integer-valued and thus the condition $F(\sigma)=m$ need not be verifiable in a finite number of stages.
\end{rmk}

As with traditional i.c.m.'s, $\alpha$-complexity $K^{(\alpha)}$ is identifiable as the minimal $\alpha$-i.c.m., by noting that for every $\alpha$-i.c.m.\ $F$, the set $S=\{(m+\lceil\alpha\rceil,\sigma):F(\sigma)< m\}$ is an $\alpha$-KC set.  Indeed, the set $S$ is c.e., and, setting 
\[
S_1=\{m+\lceil\alpha\rceil\in\omega\colon \exists\,\sigma\in\str\; (m+\lceil\alpha\rceil,\sigma)\in S\},
\]
we have that
\begin{align*}
\sum_{m+\lceil\alpha\rceil\in S_1} p_\alpha^{m+\lceil\alpha\rceil} &\leq \sum_{\sigma\in\dom(F)} p_\alpha^{F(\sigma)+\alpha} &&\text{(since $p_\alpha <1$)}\\
&=p_\alpha^\alpha\cdot\sum_{\sigma\in\dom(F)} p_\alpha^{F(\sigma)}\\
&\leq p_\alpha^\alpha = 1-p_\alpha.\\
\end{align*}
We can thus build a prefix-free machine such that for every $\sigma\in\dom(F)$, there is a string $\tau$ such that $\ell_\alpha(\tau)\in \bigl[\lceil F(\sigma)\rceil+\lceil\alpha\rceil,\lceil F(\sigma)\rceil +\lceil\alpha\rceil+1\bigr)$ and $M(\tau)=\sigma$. Thus, for every $\alpha\geq 1$ and every $\alpha$-i.c.m.\ $F$, there exists some $c\in\omega$ such that for all $\sigma\in\dom(F)$, $K^{(\alpha)}(\sigma)\leq F(\sigma)+c$.

Using the fact that $K^{(\alpha)}$ is minimal among all $\alpha$-i.c.m.'s, we can also find an upper bound for the $\alpha$-complexity of all strings $\sigma$ in terms of $\beta$-length for some $\beta\geq 1$.  Here we introduce a multiplicative term $\log p_\beta/\log p_\alpha$ that we will make use of repeatedly in Sections \ref{sec-MLR} and \ref{sec-dim}, which functions as a conversion factor between $\alpha$-length and $\beta$-length.

\begin{thm}\label{thm-K-upper-bound}
For any $k\geq 1$, there is some $c\in\omega$ such that for all $\sigma$,

\[
K^{(\alpha)}(\sigma)\leq \frac{\log p_\beta}{\log p_\alpha}\left(\ell_\beta(\sigma)+K^{(\beta)}(\lfloor\ell_\beta(\sigma)\rfloor)\right)+c.
\]

\end{thm}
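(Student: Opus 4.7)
The plan is to invoke the minimality of $K^{(k)}$ among $k$-information content measures, established in the discussion immediately preceding the theorem. Specifically, it suffices to construct a $k$-i.c.m.\ $F$ such that $F(\sigma)$ is bounded above by the right-hand side up to an additive constant; minimality then yields $K^{(k)}(\sigma) \leq F(\sigma) + O(1)$, which gives the theorem.

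Setting $C = \log q_j / \log q_k$, so that $q_k^{C} = q_j$ (the crucial identity that makes the conversion factor work), I would define
\[
F(\sigma) = \left\lceil C \cdot \left(\ell_j(\sigma) + K^{(j)}(\ell_j(\sigma))\right) \right\rceil
\]
for every $\sigma \in 2^{<\omega}$. Since $K^{(j)}$ is upper semi-computable, $\ell_j$ is computable, and $C$ is a computable real, $F$ is upper semi-computable, which ensures that the set $\{(\sigma, m) : F(\sigma) \leq m - k + 1\}$ is c.e., as required in the definition of a $k$-i.c.m.

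The substantive step is verifying $\sum_\sigma q_k^{-F(\sigma)} \leq 1$. Since $q_k > 1$ gives $q_k^{-\lceil x \rceil} \leq q_k^{-x}$, the identity $q_k^{C} = q_j$ yields
\[
\sum_\sigma q_k^{-F(\sigma)} \leq \sum_\sigma q_j^{-\ell_j(\sigma) - K^{(j)}(\ell_j(\sigma))} = \sum_{n \in \omega} q_j^{-K^{(j)}(n)} \sum_{\sigma \in S_j^n} q_j^{-n} = \sum_{n \in \omega} q_j^{-K^{(j)}(n)} \cdot s_j^n \cdot q_j^{-n}.
\]
Here I would invoke Proposition \ref{prop-another-upper-k-bound} ($s_j^n \leq q_j^n$), so each inner factor $s_j^n q_j^{-n}$ is at most $1$, leaving $\sum_{n} q_j^{-K^{(j)}(n)}$, which is at most $1$ because it is a subsum of the defining sum $\sum_{\sigma} q_j^{-K^{(j)}(\sigma)} \leq 1$ for the $j$-i.c.m.\ $K^{(j)}$ (taking natural numbers to be coded as distinct strings).

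The main obstacle is aligning the bookkeeping between the $k$- and $j$-length conversions: the exponent identity $q_k^C = q_j$ must exactly match the multiplicative constant appearing in the theorem, and the interior sum over $\sigma \in S_j^n$ must collapse precisely to a bound of $1$, which is where Proposition \ref{prop-another-upper-k-bound} is essential. Once $F$ is shown to be a valid $k$-i.c.m., minimality produces some $c'$ with $K^{(k)}(\sigma) \leq F(\sigma) + c'$ for all $\sigma$, and the ceiling contributes at most an additional $+1$, giving the theorem with $c = c' + 1$.
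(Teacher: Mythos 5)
Your proposal is correct and follows essentially the same route as the paper's proof: define $F(\sigma)=\frac{\log q_j}{\log q_k}\left(\ell_j(\sigma)+K^{(j)}(\ell_j(\sigma))\right)$, verify the weight condition by converting the base via $q_k^{\log q_j/\log q_k}=q_j$, grouping strings by $j$-length, applying Proposition \ref{prop-another-upper-k-bound} to bound $s_j^n q_j^{-n}\leq 1$, and reducing to $\sum_n q_j^{-K^{(j)}(n)}\leq 1$, then invoke minimality of $K^{(k)}$ among $k$-i.c.m.'s. Your explicit ceiling is a small tidiness improvement over the paper (which leaves the integrality of $F$ implicit), but it is not a different argument.
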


\begin{proof}
Let $\alpha\geq 1$.  It is enough to show that the map $F:\str\rightarrow\omega$ defined by $F(\sigma)=\left(\log p_\beta/\log p_\alpha\right)(\ell_\beta(\sigma)+K^{(\beta)}(\lfloor\ell_\beta(\sigma)\rfloor)$ is an $\alpha$-i.c.m., which is uniformly right-c.e.\ (i.e., computably approximable from above uniformly in $\sigma$) since $\alpha$ and $\beta$ are computable and $K^{\beta}$ is uniformly right-c.e.  Recall that for all $n\in\omega$, $|S^n_k|\leq p_\alpha^{-n}$ by Proposition \ref{prop-bound2}.  Then
since $p_\alpha^{\log p_\beta/\log p_\alpha}=p_\alpha^{\log_{p_\alpha} p_\beta}=p_\beta$, we have
\[
\begin{split}
\sum_{\sigma\in 2^{<\omega}} p_\alpha^{\frac{\log p_\beta}{\log p_\alpha}\left[\ell_\beta(\sigma)+K^{(\beta)}(\lfloor\ell_\beta(\sigma)\rfloor)\right]} &=
\sum_{\sigma\in 2^{<\omega}} p_\beta^{\ell_\beta(\sigma)+K^{(\beta)}(\lfloor\ell_\beta(\sigma)\rfloor)}\\
&\leq\sum_{n\in\omega} |S_\beta^n|\cdot p_\beta^{n+K^{(\beta)}(n)}\\
&\leq\sum_{n\in\omega} p_\beta^{-n} p_\beta^{n+K^{(\beta)}(n)}\\
&=\sum_{n\in\omega} p_\beta^{K^{(\beta)}(n)}\\
&\leq\sum_{\sigma\in 2^{<\omega}} p_\beta^{K^{(\beta)}(\sigma)}\\
&\leq\sum_{\tau\in\dom(\univ)} p_\beta^{\ell_\beta(\tau)}\\
&= \mu_{p_\beta}(\dom(\univ))\leq 1.\\
\end{split}
\]
Moreover, since $F$ is uniformly right-c.e., the set $\{(\sigma,m):F(\sigma)< m\}$ is c.e.  The result then follows by minimality of $K^{(\alpha)}$ among $\alpha$-i.c.m.'s.
\end{proof}

\begin{rmk}\label{rmk-jk}
Note that for $\beta\leq \alpha$, $p_\beta\leq p_\alpha$, so $\log p_\beta\leq\log p_\alpha$.  Hence the factor $\log p_\beta/\log p_\alpha\leq 1$ for $\beta\leq \alpha$, and, similarly, $\log p_\beta/\log p_\alpha>1$ for $\beta>\alpha$.
\end{rmk}

\section{$K^{(\alpha)}$-incompressibility and Randomness}\label{sec-MLR}

We now consider the notion of $K^{(\alpha)}$-incompressibility in the context of infinite sequences.

\begin{dfn}
For computable $\alpha,\beta\geq 1$, a sequence $X\in\cs$ is \emph{$(\alpha,\beta)$-incompressible} if
\[
K^{(\alpha)}(X\upto n)\geq \ell_\beta(X\upto n)-O(1).
\]
\end{dfn}

Just as we can characterize Martin-L\"of randomness in terms of incompressibility, it is natural to consider whether a similar result holds
for $(\alpha,\beta)$-incompressibility.  First, we have the following result.

\begin{thm}\label{thm-kj-none}
For $\alpha,\beta\geq 1$, if $\alpha<\beta$, then there are no $(\alpha,\beta)$-incompressible sequences.
\end{thm}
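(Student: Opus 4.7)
The plan is to leverage Theorem \ref{thm-K-upper-bound} directly: that inequality bounds $K^{(k)}(\sigma)$ above by a multiple of $\ell_j(\sigma)$, and Remark \ref{rmk-jk} tells us that when $k<j$ the multiplicative constant $\log q_j/\log q_k$ is strictly less than $1$. So the upper bound grows strictly more slowly than $\ell_j(\sigma)$, which should rule out the linear lower bound demanded by $(k,j)$-incompressibility.

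Concretely, I would fix $k<j$, set $\alpha=\log q_j/\log q_k$ (noting $\alpha<1$), and apply Theorem \ref{thm-K-upper-bound} to obtain a constant $c_1$ with
\[
K^{(k)}(\sigma)\leq \alpha\,\ell_j(\sigma)+\alpha\,K^{(j)}(\ell_j(\sigma))+c_1
\]
for every $\sigma$. I would then record the auxiliary bound $K^{(j)}(n)=O(\log n)$ for natural numbers $n$: feeding a standard prefix-free binary code of $n$ (of ordinary length $O(\log n)$) through $\univ$ produces a description whose $\ell_j$-length is at most $j\cdot O(\log n)=O(\log n)$, since $j$ is a fixed constant.

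For the contradiction, suppose some $X\in\cs$ is $(k,j)$-incompressible with witness $c_0$, so that $K^{(k)}(X\upto n)\geq \ell_j(X\upto n)-c_0$ for every $n$. Combining with the two displayed bounds and rearranging gives
\[
(1-\alpha)\,\ell_j(X\upto n)\leq \alpha\,K^{(j)}(\ell_j(X\upto n))+c_0+c_1 = O\!\left(\log\ell_j(X\upto n)\right).
\]
Since $j\geq 1$, we have $\ell_j(\tau)\geq |\tau|$ for every string $\tau$, so $\ell_j(X\upto n)\geq n$ and in particular $\ell_j(X\upto n)\to\infty$. The left-hand side is linear in $\ell_j(X\upto n)$ with positive slope $1-\alpha$, while the right-hand side is only logarithmic, so the inequality fails for sufficiently large $n$, a contradiction.

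The only real ingredient beyond direct citation of Theorem \ref{thm-K-upper-bound} is the auxiliary bound $K^{(j)}(n)=O(\log n)$, and this is not a serious obstacle: it is essentially the classical observation that natural numbers admit short prefix-free codes, transported to the generalized setting by noting that $\ell_j$-length is at most $j$ times ordinary length. Once this bound is in hand, the inequality $\alpha<1$ coming from Remark \ref{rmk-jk} does all the real work.
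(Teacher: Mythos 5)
Your proposal is correct and follows essentially the same route as the paper: combine the upper bound of Theorem \ref{thm-K-upper-bound} with the assumed lower bound, use $\log q_j/\log q_k<1$ (Remark \ref{rmk-jk}) to isolate a positive linear term in $\ell_j(X\upto n)$, and defeat it with a logarithmic bound on $K^{(j)}(\ell_j(X\upto n))$. Your auxiliary estimate $K^{(j)}(n)=O(\log n)$ is exactly the paper's Lemma \ref{lem-K-inequality}, proved there by the same prefix-free encoding argument you sketch.
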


To show this, we will require the following lemma:

\begin{lem}\label{lem-K-inequality}
For every computable $\alpha\geq 1$, there is some $c\in\omega$ such that for all $n\geq 1$, $K^{(\alpha)}(n)\leq (\alpha+1)\log n + c$.
\end{lem}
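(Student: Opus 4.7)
The strategy is to adapt the standard proof that $K(n) \leq 2\log n + O(1)$ by carefully choosing an encoding whose zeros dominate its ones, so that the $k$-length cost is minimized. Recall that in the standard proof one describes $n$ by the self-delimiting string $0^{\ell}1\,\mathrm{bin}(n)$, where $\ell=|\mathrm{bin}(n)|$; a decoder reads zeros until it sees a $1$ (thereby learning $\ell$), then reads exactly $\ell$ more bits. This encoding is prefix-free, and crucially the length-announcing prefix consists entirely of $0$'s, which are cheap under $\ell_k$.

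\textbf{Step 1.} Build a prefix-free machine $M$ that, on input $0^\ell 1 \sigma$ with $|\sigma|=\ell$, outputs the natural number whose binary representation is $\sigma$. The domain is prefix-free by construction, and $M$ is easily seen to be computable.

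\textbf{Step 2.} Fix $n\geq 1$, let $\ell=\lceil\log(n+1)\rceil$, and consider $\tau_n = 0^{\ell}\fr 1 \fr \mathrm{bin}(n)\in\dom(M)$. Bounding the $k$-length contributions termwise:
\begin{itemize}
\item the prefix $0^{\ell}$ contributes exactly $\ell$ to $\ell_k(\tau_n)$ (each $0$ costs $1$);
\item the delimiter $1$ contributes $k$;
\item $\mathrm{bin}(n)$ is an $\ell$-bit string, and each bit costs at most $k$, so its $k$-length is at most $k\ell$.
\end{itemize}
Thus $\ell_k(\tau_n)\leq (k+1)\ell + k \leq (k+1)\log n + (2k+1)$. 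Hence $K_M^{(k)}(n)\leq (k+1)\log n + (2k+1)$.

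\textbf{Step 3.} Invoke invariance (the version established for $K^{(\ell)}$ earlier in the paper for sub-additive $\ell$; since $\ell_k$ is additive, this applies): there is a constant $c_M$ depending only on the coding constant of $M$ such that $K^{(k)}(n)\leq K_M^{(k)}(n)+c_M$. Combining with Step 2 yields the desired inequality with $c = 2k+1+c_M$.

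There is no substantive obstacle here: the single substantive point is simply the observation that a $0$ has $k$-length $1$ regardless of $k$, so packing the length-announcing unary prefix entirely with zeros keeps its contribution linear in $\log n$ rather than scaling by $k$. The factor $k+1$ then reflects exactly the worst-case $k$-length cost of the $\ell$ bits that genuinely carry the content of $n$, plus the one cheap unary bit per position of the prefix.
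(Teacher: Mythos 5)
Your proof is correct and follows essentially the same route as the paper's: both describe $n$ via the self-delimiting encoding $0^{\ell}1\,\mathrm{bin}(n)$, bound its $k$-length by $(k+1)\lfloor\log n\rfloor + 2k+1$ using the fact that each $0$ costs only $1$ while the content bits cost at most $k$ each, and then invoke invariance of $K^{(k)}$. The only differences are cosmetic (e.g.\ writing the bit-length as $\lceil\log(n+1)\rceil$ rather than $\lfloor\log n\rfloor+1$).
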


\begin{proof}
Let $\alpha\geq 1$ be computable.  For each $m$, let $\{\sigma_i^m\}_{i\leq 2^m}$ be a strictly-increasing enumeration of $2^m$ with respect to the lexicographic ordering.  We use the prefix-free encoding of the positive integers given by, for every $n\geq 1$, $\rho_n=1^{\lfloor\log n\rfloor+1} 0 \sigma_i^{\lfloor\log n\rfloor+1}$\!, where $\sigma_i^{\lfloor \log n\rfloor+1}$ is the binary representation of $n$.   Given $n$, we have that
\[
\begin{split}
\ell_\alpha(\rho_n) &=\ell_\alpha(1^{\lfloor\log n\rfloor+1} 0 \sigma_i^{\lfloor\log n\rfloor+1})\\
&= \ell_\alpha(1^{\lfloor\log n\rfloor+1})+\ell_\alpha(0)+\ell_\alpha(\sigma_i^{\lfloor\log n\rfloor+1})\\
&=\lfloor\log n\rfloor+1+\alpha+\ell_\alpha(\sigma_i^{\lfloor\log n\rfloor+1})\\
&\leq \lfloor\log n\rfloor+1+\alpha+\ell_\alpha(0^{\lfloor\log n\rfloor+1})\\
&= \lfloor\log n\rfloor+1+\alpha+\alpha(\lfloor\log n\rfloor+1)\\
&\leq (\alpha+1)\lfloor\log n\rfloor + 2\alpha+1.
\end{split}
\]

\noindent Now, by invariance for $K^{(\alpha)}$, let $c\in\omega$ be such that for all $n\geq 1$, $K^{(\alpha)}(n)\leq \ell_\alpha(\rho_n)+c$.  Thus $K^{(\alpha)}(n)\leq (\alpha+1)\lfloor\log n\rfloor + 2\alpha+1+c$, and the result follows.
\end{proof}

\begin{proof}[Proof of Theorem \ref{thm-kj-none}]
For $\alpha<\beta$, suppose that $X\in\cs$ is $(\alpha,\beta)$-incompressible, so that there is some $d$ such that for all $n\in\omega$,
\begin{equation}\label{eq3}
K^{(\alpha)}(X\upto n)\geq \ell_\beta(X\upto n)-d.
\end{equation}
By Theorem \ref{thm-K-upper-bound}, let $c$ be such that for all $n\in\omega$,
\begin{equation}\label{eq4}
K^{(\alpha)}(X\upto n) \leq\frac{\log p_\beta}{\log p_\alpha}\left(\ell_\beta(X\upto n)+K^{(\beta)}(\lfloor\ell_\beta(X\upto n)\rfloor)\right)+c.
\end{equation}
Combining (\ref{eq3}) and (\ref{eq4}) and rearranging yields
\begin{equation}\label{eq5}
\Biggl(1-\frac{\log p_\beta}{\log p_\alpha}\Biggr)\ell_\beta(X\upto n)\leq \frac{\log p_\beta}{\log p_\alpha}K^{(\beta)}(\lfloor\ell_\beta(X\upto n)\rfloor)+c+d.
\end{equation}
Applying Lemma \ref{lem-K-inequality} to (\ref{eq5}) yields
\[
\Biggl(1-\frac{\log p_\beta}{\log p_\alpha}\Biggr)\ell_\beta(X\upto n)\leq \frac{\log p_\beta}{\log p_\alpha}(\alpha+1)\log(\lfloor\ell_\beta(X\upto n)\rfloor)+c+d
\]
which implies that
\begin{equation}\label{eq6}
\frac{\Bigl(1-\frac{\log p_\beta}{\log p_\alpha}\Bigr)}{\frac{\log p_\beta}{\log p_\alpha}(\alpha+1)}\ell_\beta(X\upto n)\leq \log(\ell_\beta(X\upto n))+c+d.
\end{equation}
By Remark \ref{rmk-jk}, since $\alpha<\beta$, we have $\frac{\log p_\beta}{\log p_\alpha}<1$, so setting $r=\frac{\log p_\beta}{\log p_\alpha}$, (\ref{eq6}) becomes
\begin{equation}\label{eq7}
\frac{1-r}{r(\alpha+1)}\ell_\beta(X\upto n)\leq \log(\ell_\beta(X\upto n))+c+d,
\end{equation}
where the term on the left-hand side of the inequality is positive. Let $N$ be the least such that for all $n\geq N$,
\[
\frac{1-r}{r(\alpha+1)}n> \log(n)+c+d;
\]
such an $N$ exists by a routine calculation.  Since the function $n\mapsto \ell_\beta(X\upto n)$ is unbounded, we can find some $n_0$ such that $\ell_\beta(X \upto n_0)\geq N$, so that
\[
\frac{1-r}{r(k+1)}\ell_j(X\upto n_0)> \log(\ell_\beta(X\upto n_0))+c+d,
\]
which contradicts (\ref{eq7}).  Thus, no $(\alpha,\beta)$-incompressible sequences exist.
\end{proof}

Next, we investigate $(\alpha,\beta)$-incompressible sequences for $\alpha\geq \beta$.  In particular, we show that there is a connection between a sequence being $(\alpha,\beta)$-incompressible and its being complex.  

\begin{dfn}
$X\in\cs$ is \emph{complex} if there is some computable order (that is, a computable, unbounded, non-decreasing function) $f:\omega\rightarrow\omega$ such that 
\[
K(X\upto n)\geq f(n).
\]
\end{dfn}

We now show that for $\alpha\geq \beta$, all $(\alpha,\beta)$-incompressible sequences are complex.  We will make use of an effective version of the law of large numbers established by Davie in \cite{Dav01}.  Let $\mathcal{K}_c=\{X\in\cs\colon(\forall n)K(X\upto n)\geq n-c\}$.  

\begin{thm}[Davie \cite{Dav01}]\label{thm-Davie}
For any $c\in\omega$ and $\epsilon>0$, we can effectively find $n(c, \epsilon)\in\omega$ such that if $X\in \mathcal{K}_c$ then for all $n>n(c,\epsilon)$,
\[
\Biggl|\frac{\#_0(X\upto n)}{n}-\frac{1}{2}\Biggr|<\epsilon.
\]
\end{thm}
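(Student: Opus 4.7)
The plan is to execute the classical compression-via-type-classes argument, being careful to track effectivity of all constants. The intuition is that if the initial segment $X\upto n$ has a biased proportion of zeros, then it belongs to a small type class and hence admits a short description, contradicting the incompressibility hypothesis $K(X\upto n)\geq n-c$.

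First I would establish a uniform compression bound for strings with a fixed number of zeros. Any $\sigma\in 2^n$ with $\#_0(\sigma)=k$ is determined by the triple $(n,k,i)$, where $i<\binom{n}{k}$ is the index of $\sigma$ in a computable enumeration of $\{\tau\in 2^n:\#_0(\tau)=k\}$. Encoding $n$ prefix-freely in $2\log n+O(1)$ bits, then appending $k\leq n$ in $\log n$ bits and $i$ in $\lceil\log\binom{n}{k}\rceil$ bits yields a prefix-free description, so
\[
K(\sigma)\leq\log\binom{n}{k}+3\log n+O(1),
\]
where the $O(1)$ depends only on the fixed universal prefix-free machine. Second I would invoke the standard entropy estimate $\log\binom{n}{k}\leq n\cdot H(k/n)$, with $H(p)=-p\log p-(1-p)\log(1-p)$. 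Since $H$ is continuous, strictly concave, and attains its maximum value $1$ only at $p=1/2$, for any rational $\epsilon>0$ one can effectively compute a positive rational $\delta=\delta(\epsilon)$ with $H(p)\leq 1-\delta$ whenever $|p-1/2|\geq\epsilon$.

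Combining these, suppose $X\in\mathcal{K}_c$ and $n$ satisfies $|\#_0(X\upto n)/n-1/2|\geq\epsilon$. Then
\[
n-c\leq K(X\upto n)\leq n(1-\delta)+3\log n+O(1),
\]
which rearranges to $\delta n\leq 3\log n+c+O(1)$. Since the left side is linear and the right side is logarithmic in $n$ (with fixed additive constant $c+O(1)$), this inequality fails for all $n$ beyond an effectively computable threshold. I would then define $n(c,\epsilon)$ to be the least $n$ for which $\delta n>3\log n+c+O(1)$; any $n$ exceeding $n(c,\epsilon)$ with an $\epsilon$-biased initial segment would contradict membership in $\mathcal{K}_c$.

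The main obstacle is purely bookkeeping: one must verify that every ingredient is effective uniformly in $c$ and the rational $\epsilon$. The compression bound uses an absolute $O(1)$ constant fixed by the universal machine and independent of $c$ and $\epsilon$; the value $\delta(\epsilon)$ is computable from $\epsilon$ because $H$ is a computable function on the rationals and one can search for a rational separation; and the threshold $n(c,\epsilon)$ is then found by a computable search that terminates since a linear function eventually dominates a logarithmic one. No genuine difficulty arises beyond this accounting.
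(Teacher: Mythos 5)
The paper does not actually prove this statement: it is quoted verbatim from Davie's paper \cite{Dav01} and used as a black box, so there is no internal proof to compare against. Your argument, however, is a correct and self-contained derivation of the statement as it appears here. The type-class encoding is sound (the concatenation of a prefix-free code for $n$, a fixed-length field for $k$, and a fixed-length field for the index $i$ does yield a prefix-free domain, since each field's length is determined by the data already decoded), the bound $\log\binom{n}{k}\leq nH(k/n)$ is standard, and the final comparison $\delta n\leq 3\log n+c+O(1)$ fails beyond a threshold computable from $c$ and $\delta$. Two minor points worth flagging. First, the effectivity claim ``we can effectively find $n(c,\epsilon)$'' only makes sense if $\epsilon$ is given as a rational (or by a name of a computable real); you implicitly assume this, which matches how the theorem is applied in the paper (always with rational $\epsilon$). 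Second, Davie's actual result is considerably sharper --- he extracts near-optimal constants tied to Chernoff bounds and the binary expansion of $\Omega$ --- whereas your elementary counting argument gives a cruder threshold; but the cruder version is exactly what the theorem as stated (and as used later, e.g.\ in Lemma \ref{lem-davie}) requires, so nothing is lost.
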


Theorem \ref{thm-Davie} does not just apply to infinite sequences but also to sufficiently long incompressible strings.  Indeed, for a fixed $c\in\omega$ and $\epsilon>0$, Theorem \ref{thm-Davie} provides a bound $n(c,\epsilon)$ such that any $c$-incompressible string $\sigma$ of length exceeding $n(c,\epsilon)$ satisfies the above condition on $\#_0(\sigma)$.  We will apply Theorem \ref{thm-Davie} to a specific collection of sufficiently incompressible finite strings.

For each $\sigma$, let $\sigma^*$ be the lexicographically least string such that $U(\sigma^*)=\sigma$.

\begin{lem}\label{lem-davie}
For $\epsilon\in(0,1)$, there is some $c$ and $n(c,\epsilon)$ such that for all strings $\sigma$ such that $|\sigma^*|\geq n(c,\epsilon)$, 
\[
\Biggl|\frac{\#_0(\sigma^*)}{|\sigma^*|}-\frac{1}{2}\Biggr|<\epsilon.
\].
\end{lem}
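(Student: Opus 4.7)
The plan is to show that every string of the form $\sigma^*$ is uniformly $c$-incompressible for some constant $c$ that does not depend on $\sigma$, and then invoke Theorem \ref{thm-Davie} in its finite-string form directly.

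First, I would note that under the standard (length-)lexicographic convention, $\sigma^*$ is a shortest prefix-free description of $\sigma$, so $|\sigma^*| = K(\sigma)$. Next, I would argue that there exists a constant $c$, independent of $\sigma$, such that $K(\sigma^*) \geq |\sigma^*| - c$. Consider the prefix-free machine $M = \univ \circ \univ$, whose domain is a subset of $\dom(\univ)$ and therefore prefix-free. For any $\tau \in \dom(\univ)$ with $\univ(\tau) = \sigma^*$, one has $M(\tau) = \univ(\sigma^*) = \sigma$. By the invariance property applied to $M$, there is a coding constant $c_M$ depending only on $M$ (and not on $\sigma$) such that
\[
K(\sigma) \leq K_M(\sigma) + c_M \leq K(\sigma^*) + c_M.
\]
Using $|\sigma^*| = K(\sigma)$ and rearranging yields $K(\sigma^*) \geq |\sigma^*| - c_M$, so every string of the form $\sigma^*$ is $c_M$-incompressible.

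Finally, I would apply Theorem \ref{thm-Davie} in its finite-string form, as mentioned in the paragraph immediately following its statement. Setting $c = c_M$, Davie's theorem provides an effectively computable threshold $n(c, \epsilon)$ such that any $c$-incompressible string of length at least $n(c, \epsilon)$ has the proportion of $0$'s within $\epsilon$ of $1/2$. Specializing to strings of the form $\sigma^*$ whose length is at least $n(c, \epsilon)$ gives the stated conclusion.

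I do not anticipate a serious obstacle; the only delicate point is confirming that the coding constant $c_M$ produced by invariance is genuinely uniform in $\sigma$, which it is because the machine $M = \univ \circ \univ$ is fixed once and for all, before any $\sigma$ is chosen. Everything else is an immediate unpacking of the definition of $\sigma^*$ together with the already-established finite-string version of Davie's law.
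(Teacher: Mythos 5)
Your proposal is correct and is essentially the paper's own argument: both take $M=\univ\circ\univ$, use its coding constant together with $|\sigma^*|=K(\sigma)$ to conclude that every $\sigma^*$ is $c$-incompressible for that fixed constant, and then apply the finite-string form of Davie's theorem. The only cosmetic difference is that the paper phrases the incompressibility step as a proof by contradiction while you derive $K(\sigma^*)\geq|\sigma^*|-c$ directly.
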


\begin{proof}
Let $c$ be the coding constant of the machine $U\circ U$.  We claim that for sufficiently long strings $\sigma$, $\sigma^*$ is $c$-incompressible.  Indeed, if there is some $\sigma$ such that $|\sigma^*|>c$ and $|\sigma^*|$ is $c$-compressible, then there is some $\tau$ such that $U(\tau)=\sigma^*$ and $|\tau|<|\sigma^*|-c$.  But then $U(U(\tau))=\sigma$, so that
\[
K(\sigma)\leq K_{U\circ \,U}(\sigma)+c\leq |\tau| +c<|\sigma^*|=K(\sigma),
\]
a contradiction.  Applying Davie's theorem to any string $\sigma^*$ with $|\sigma^*|\geq n(c,\epsilon)$ yields the conclusion.
\end{proof}

\begin{thm}
For computable $\alpha\geq \beta\geq 1$, every $(\alpha,\beta)$-incompressible sequence is complex.
\end{thm}

\begin{proof}
For $\alpha\geq \beta$, let $X$ be $(\alpha,\beta)$-incompressible.  Let $d\in\omega$ be such that for every $n\in\omega$, if $U(\tau)=X\upto n$, then
$\ell_\alpha(\tau)\geq\ell_\beta(X\upto n)-d$.  By definition of $\ell_\alpha$ and $\ell_\beta$, this yields
\begin{equation}\label{eq-kj1}
|\tau|+(\alpha-1)\cdot \#_1(\tau)=\ell_\alpha(\tau)\geq \ell_\beta(X\upto n)-d\geq n-d.
\end{equation}

Fix some rational $\epsilon\in (0,1)$.  For all sufficiently large $n\in\omega$, we have $K(X\upto n)\geq n(c,\epsilon)$, where $c$ is the coding constant of $U\circ U$.  Applying Lemma \ref{lem-davie}  to $\tau=(X\upto n)^*$ yields
\[
\Biggl|\frac{\#_0(\tau)}{|\tau|}-\frac{1}{2}\Biggr|<\epsilon,
\]
which implies that 
\begin{equation}\label{eq-kj2}
\#_1(\tau)\leq (1/2+\epsilon)|\tau|.
\end{equation}
From (\ref{eq-kj2}) it follows that
\[
|\tau|+(\alpha-1)\cdot\#_1(\tau)\leq |\tau|+(\alpha-1)(1/2+\epsilon)|\tau|.
\]
Combining this inequality with (\ref{eq-kj1}) and the fact that $|\tau|=K(X\upto n)$ yields
\[
K(X\upto n)\bigl(1+(\alpha-1)(1/2+\epsilon)\bigr)\geq n-d,
\]
from which it follows that 
\[
K(X\upto n)\geq\frac{n-d}{1+(\alpha-1)(1/2+\epsilon)}
\]
for some $e\in\omega$. Setting 
\[
f(n)=\Biggl\lfloor\frac{n-d}{1+(\alpha-1)(1/2+\epsilon)}\Biggr\rfloor
\]
(and setting $f(n)=0$ if $n<d$) yields the desired computable order that witnesses that $X$ is complex.
\end{proof}

As we will show shortly, $(\alpha,\beta)$-incompressibility does not guarantee randomness, at least for the case that $\beta=1$:  there are $(\alpha,1)$-incompressible sequences that are not random with respect to any computable measure.  However, if we modify the definition of $(\alpha,\beta)$-incompressibility for any $\alpha$ and $\beta$ (even when $\alpha<\beta$), we do get a characterization of randomness.

\begin{dfn}

For computable $\alpha,\beta\geq 1$, a sequence $X\in\cs$ is \emph{generalized $(\alpha,\beta)$-incompressible} if
\[
K^{(\alpha)}(X\upto n)\geq \frac{\log p_\beta}{\log p_\alpha}\ell_\beta(X\upto n)-O(1).
\]
\end{dfn}

\begin{thm}\label{thm-gen-kj}
For any $X\in 2^\omega$ and computable $\alpha,\beta\geq 1$, $X\in\MLR_{\mu_{p_\beta}}$ if and only if $X$ is generalized $(\alpha,\beta)$-incompressible.
\end{thm}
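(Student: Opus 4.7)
The plan is to prove both directions as a close analogue of the Levin--Schnorr theorem, with the classical ingredients replaced by their $k$-versions developed earlier in the paper. The algebraic identity that makes the factor $\log q_j/\log q_k$ appear naturally is $q_k^{\log q_j/\log q_k}=q_j$, which immediately gives $q_k^{-(\log q_j/\log q_k)\ell_j(\sigma)} = q_j^{-\ell_j(\sigma)} = \lambda_j(\sigma)$.

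For the ``only if'' direction (random implies incompressible), suppose $X \in \MLR_{\lambda_j}$, and for each $c \in \omega$ consider the uniformly $\Sigma^0_1$ class
\[
U_c := \bigcup\Bigl\{\llbracket\sigma\rrbracket : K^{(k)}(\sigma) < (\log q_j/\log q_k)\ell_j(\sigma) - c\Bigr\}.
\]
Taking the antichain $A_c$ of prefix-minimal such $\sigma$'s and, for each $\sigma \in A_c$, choosing any witness $\tau_\sigma \in \dom(\univ)$ with $\univ(\tau_\sigma)=\sigma$ and $\ell_k(\tau_\sigma) < (\log q_j/\log q_k)\ell_j(\sigma) - c$, the identity above yields $\lambda_j(\sigma) < q_k^{-c}\lambda_k(\tau_\sigma)$. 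Since the $\tau_\sigma$ are distinct elements of the prefix-free set $\dom(\univ)$, summing over $A_c$ gives $\lambda_j(U_c) < q_k^{-c}\lambda_k(\dom(\univ)) \leq q_k^{-c}$. By Remark \ref{rmk-ml-test}, $\{U_c\}_{c\in\omega}$ is a $\lambda_j$-Martin-L\"of test, so randomness of $X$ forces $X \notin U_c$ for some $c$, which is exactly the desired bound $K^{(k)}(X\upto n) \geq (\log q_j/\log q_k)\ell_j(X\upto n) - c$ for all $n$.

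For the converse, suppose $X \notin \MLR_{\lambda_j}$, and fix a $\lambda_j$-ML-test $\{W_m\}$ with $X \in \bigcap_m W_m$ and $\lambda_j(W_m) \leq q_j^{-2m}$ (again via Remark \ref{rmk-ml-test}), together with uniformly c.e.\ prefix-free sets $V_m$ satisfying $W_m = \llbracket V_m\rrbracket$. The measure bound forces $\ell_j(\sigma) \geq 2m$ for every $\sigma \in V_m$, so for any $d \geq 0$ the requests $r_{m,\sigma} := \lceil (\log q_j/\log q_k)(\ell_j(\sigma)-m)\rceil + d$ are non-negative integers. A direct computation using the key identity gives
\[
\sum_{m,\, \sigma \in V_m} q_k^{-r_{m,\sigma}} \;\leq\; q_k^{-d}\sum_m q_j^m \lambda_j(W_m) \;\leq\; q_k^{-d}\sum_m q_j^{-m},
\]
a convergent geometric series since $q_j > 1$; take $d$ large enough so the total is at most $q_k^{-k}$. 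Applying the $k$-KC theorem and invariance yields a constant $e$ such that $K^{(k)}(\sigma) \leq (\log q_j/\log q_k)\ell_j(\sigma) - (\log q_j/\log q_k)m + e$ for every $m$ and every $\sigma \in V_m$. Since $X \in W_m$ supplies some $X\upto n_m \in V_m$ and $(\log q_j/\log q_k)\,m \to \infty$, no single constant can witness generalized $(k,j)$-incompressibility of $X$.

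The main technical burden is bookkeeping rather than ideas: one must carefully track the scaling factor through both directions, absorb the $q_k^{-k}$ (rather than $1$) mass bound of the $k$-KC theorem into the additive constant $d$, and verify uniformity of the various c.e.\ constructions. All conceptual content is encoded in the identity $q_k^{\log q_j/\log q_k} = q_j$, which mediates between the $q_k$-weighted world of $K^{(k)}$ and the $q_j$-weighted world of $\lambda_j$; the multiplicative factor in the theorem statement is simply the price of changing base.
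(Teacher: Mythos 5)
Your proposal is correct and follows essentially the same route as the paper's proof: the forward direction builds the same uniformly $\Sigma^0_1$ test from the strings whose $K^{(k)}$-complexity dips below the threshold and bounds its $\lambda_j$-measure via the identity $q_k^{\log q_j/\log q_k}=q_j$ (which the paper packages as Lemma \ref{lem-kj2}), and the converse turns a failed $\lambda_j$-test into a family of requests fulfilled through the $k$-KC theorem. The only cosmetic differences are that the paper routes the converse through the notion of a $k$-information content measure (itself justified by the $k$-KC theorem) and uses the subsequence $R_{n(j+1)+1}$ together with Lemma \ref{lem-kj1} to make the weighted sum converge, whereas you pass to a subsequence with measure bound $q_j^{-2m}$ and absorb the resulting geometric-series slack into the additive constant $d$.
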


To prove Theorem \ref{thm-gen-kj}, we need two lemmas.

\begin{lem} \label{lem-kj1}
For every $\alpha\geq 1$,

\[
\sum_{n\in\omega} p_\alpha^{\alpha n+1}= 1.
\]

\end{lem}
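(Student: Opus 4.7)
The sum to bound is a geometric series in $q_k^{-k}$ scaled by $q_k^{-1}$, so the plan is to factor out $q_k^{-1}$, apply the geometric series formula, and then exploit the defining identity for $p_k$ (equivalently $q_k$) to simplify the result. Concretely, I would write
\[
\sum_{n\in\omega} q_k^{-(kn+1)} \;=\; q_k^{-1}\sum_{n\in\omega}\bigl(q_k^{-k}\bigr)^n \;=\; \frac{q_k^{-1}}{1-q_k^{-k}},
\]
which is valid because $q_k>1$ implies $q_k^{-k}\in(0,1)$.

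The next step is to recall that by the definition of $p_k$ we have $p_k^k = 1-p_k$, or equivalently $q_k^{-k}=1-q_k^{-1}$. This rearranges to $1-q_k^{-k}=q_k^{-1}$, as was already observed in the identity $q_k^{-1}+q_k^{-k}=1$ used earlier in the paper (just before equation $(\ref{eq1})$). Substituting this into the expression above makes the numerator and denominator cancel, yielding the bound of exactly $1$.

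There is really no obstacle here: the argument is a two-line calculation combining a standard geometric series with the characterizing identity for $q_k$. I would simply present the displayed equalities in sequence and conclude. The only minor point worth noting in passing is that the bound in the lemma is tight (the sum equals $1$ rather than being strictly less), which is a useful fact if this lemma is to feed into a subsequent application of the $k$-KC theorem or a Martin-L\"of test construction in the proof of Theorem~\ref{thm-gen-kj}.
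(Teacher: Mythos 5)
Your proof is correct, and it reaches the conclusion by a slightly different route than the paper. You evaluate the sum in closed form: factor out $q_k^{-1}$, apply the geometric series formula (justified since $q_k>1$ gives $q_k^{-k}\in(0,1)$), and then cancel using the identity $q_k^{-1}+q_k^{-k}=1$, obtaining the value exactly $1$. The paper instead proves by induction on $n$ that every partial sum satisfies $\sum_{i=0}^n q_k^{-(ki+1)}+q_k^{-(n+1)k}=1$, so the partial sums are increasing and bounded by $1$, whence the limit is at most $1$. Both arguments pivot on the same defining identity for $q_k$; the paper's induction is essentially a hand-verification of the telescoping that your geometric series formula packages in one step. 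Your version is shorter and makes the tightness of the bound explicit, while the paper's version gives the quantitative statement that the $n$-th tail is exactly $q_k^{-(n+1)k}$, which is occasionally useful but not needed for the lemma as stated. Either proof is acceptable.
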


\begin{proof}
\[
\sum_{n\in\omega} p_\alpha^{\alpha n+1}=p_\alpha\sum_{n\in\omega}(1-p_\alpha)^n=\dfrac{p_\alpha}{1-(1-p_\alpha)}=1.
\]
\end{proof}

\begin{lem}\label{lem-kj2}
Let $M$ be a prefix-free machine, let $n\in\omega$, let $\alpha,\beta\geq 1$ be computable, and let 
\[
R_n=\left\{\sigma:K_M^{(\alpha)}(\sigma)<\frac{\log p_\beta}{\log p_\alpha}(\ell_\beta(\sigma)-n)\right\}.
\]
Then $\mu_{p_\beta}(R_n)\leq p_\beta^n\cdot\mu_{p_k}(\dom(M))$.
\end{lem}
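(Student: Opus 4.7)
The plan is to combine countable subadditivity of $\lambda_j$ with a pointwise conversion between the two bases $q_j$ and $q_k$, exploiting the key identity $q_j^{\log q_k/\log q_j}=q_k$ (which is just the definition of logarithm rewritten). First I would pass from $\lambda_j(\llbracket S_n\rrbracket)$ to a sum, writing
\[
\lambda_j(S_n)=\lambda_j\Bigl(\bigcup_{\sigma\in S_n}\llbracket\sigma\rrbracket\Bigr)\leq\sum_{\sigma\in S_n}\lambda_j(\sigma)=\sum_{\sigma\in S_n}q_j^{-\ell_j(\sigma)},
\]
so that we no longer need $S_n$ to be prefix-free.

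Next, for each $\sigma\in S_n$, I would pick a witness $\tau_\sigma\in\dom(M)$ with $M(\tau_\sigma)=\sigma$ realizing the minimum in $K_M^{(k)}(\sigma)$, so that $\ell_k(\tau_\sigma)<\frac{\log q_j}{\log q_k}(\ell_j(\sigma)-n)$. Rearranging, $\ell_j(\sigma)-n>\frac{\log q_k}{\log q_j}\ell_k(\tau_\sigma)$. Since $q_j>1$, exponentiating by $-1$ flips the inequality, and using $q_j^{\log q_k/\log q_j}=q_k$ I would obtain the pointwise bound
\[
q_j^{-\ell_j(\sigma)}=q_j^{-n}\cdot q_j^{-(\ell_j(\sigma)-n)}<q_j^{-n}\cdot q_j^{-\frac{\log q_k}{\log q_j}\ell_k(\tau_\sigma)}=q_j^{-n}\cdot q_k^{-\ell_k(\tau_\sigma)}.
\]

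Finally I would sum over $\sigma\in S_n$. Because $M$ is a function, the assignment $\sigma\mapsto\tau_\sigma$ is injective, so the $\tau_\sigma$'s form a subset of $\dom(M)$, giving
\[
\sum_{\sigma\in S_n}q_j^{-\ell_j(\sigma)}\leq q_j^{-n}\sum_{\sigma\in S_n}q_k^{-\ell_k(\tau_\sigma)}\leq q_j^{-n}\sum_{\tau\in\dom(M)}q_k^{-\ell_k(\tau)}=q_j^{-n}\lambda_k(\dom(M)),
\]
where the final equality uses that $\dom(M)$ is prefix-free and $\lambda_k(\tau)=q_k^{-\ell_k(\tau)}$. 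Combining the displayed inequalities yields the claim.

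The argument is essentially bookkeeping; the only step that requires care is the base-change identity $q_j^{\log q_k/\log q_j}=q_k$ and tracking that the strict inequality in the definition of $S_n$ survives the exponentiation with the correct direction, after which injectivity of $\sigma\mapsto\tau_\sigma$ closes out the argument.
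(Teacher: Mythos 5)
Your proof is correct and follows essentially the same route as the paper's: bound $\lambda_j(S_n)$ by the sum of $q_j^{-\ell_j(\sigma)}$, choose a witness $\tau_\sigma\in\dom(M)$ for each $\sigma\in S_n$, convert bases via $q_j^{\log q_k/\log q_j}=q_k$, and bound the resulting sum by $\lambda_k(\dom(M))$ using injectivity of $\sigma\mapsto\tau_\sigma$. Your explicit mention of that injectivity is a small clarification the paper leaves implicit, but the argument is the same.
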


\begin{proof}
For every $\sigma\in R_n$, let $\tau_\sigma\in\dom(M)$ be such that $M(\tau_\sigma)\halts=\sigma$ and
\[
\ell_\alpha(\tau_\sigma)<\frac{\log p_\beta}{\log p_\alpha}(\ell_\beta(\sigma)-n).
\]
Noting, as in the proof of Theorem \ref{thm-K-upper-bound}, that $p_\beta^{\log p_\alpha/\log p_\beta}=p_\beta^{\log_{p_\beta} p_\alpha}=p_\alpha$, we have
\[
\begin{split}
\mu_{p_\beta}(R_n) &= \sum_{\sigma\in R_n} p_\beta^{\ell_\beta(\sigma)}<\sum_{\sigma\in R_n} p_\beta^{\frac{\log p_\alpha}{\log p_\beta}\ell_\alpha(\tau_\sigma)+n}\\
&= p_\beta^{n}\sum_{\sigma\in R_n} p_\alpha^{\ell_\alpha(\tau_\sigma)}\leq p_\beta^{n}\sum_{\tau\in\dom(M)} p_\alpha^{\ell_\alpha(\tau)}\\
&=p_\beta^{n}\cdot\mu_{p_\alpha}(\dom(M)).
\end{split}
\]
\end{proof}

\begin{proof}[Proof of Theorem \ref{thm-gen-kj}]($\rightarrow$): Let $X\in 2^\omega$ and suppose that $X\in\MLR_{\mu_{p_\beta}}$. For any $n$, let
\[
R_n=\left\{\sigma\in\str:K^{(\alpha)}(\sigma)<\frac{\log p_\beta}{\log p_\alpha}(\ell_\beta(\sigma)-n)\right\}.
\]
By Lemma \ref{lem-kj2}, the sequence $\{\U_n\}_{n\in\omega}$ such that $\U_n=\llbracket R_n\rrbracket$ for every $n$ is uniformly $\Sigma^0_1$ and $\mu_{p_{\beta}}(\U_n)\leq p_\beta^n$ for every $n\in\omega$.  Since $\{p_\beta^{n}\}_{n\in\omega}$ is a computable, decreasing sequence of computable reals, it follows that a computable subsequence of $\{\U_n\}_{n\in\omega}$ is a $\mu_{p_\beta}$-Martin-L{\"o}f test, and so $X\not\in \U_c$ for some $c\in\omega$.  Thus, for all $n$,
\[
K^{(\alpha)}(X\upto n)\geq\frac{\log p_\beta}{\log p_\alpha}(\ell_\beta(X\upto n)-c).
\]
($\leftarrow$): Suppose $X\not\in\MLR_{\mu_{p_\beta}}$.  Let $\{\U_n\}_{n\in\omega}$ be a $\mu_{p_\beta}$-Martin-L{\"o}f test that $X$ fails, and let $\{R_n\}_{n\in\omega}$ be a uniformly c.e.\ sequence of prefix-free sets such that $\U_n=\llbracket R_n\rrbracket$.  We define a function $F$ on $\bigcup_{n\geq 1} R_{n(\lceil\beta\rceil+1)+1}$:  for $\sigma\in R_{n(\lceil\beta\rceil+1)+1}$ (where this is the greatest such $n$), we set $F(\sigma)=\frac{\log p_\beta}{\log p_\alpha}(\ell_\beta(\sigma)-n)$, which is a computable real since $\alpha$ and $\beta$ are computable.

We verify that $F$ is an $\alpha$-i.c.m.  Again using the fact that $p_\alpha^{\log p_\beta/\log p_\alpha}=p_\beta$, we have

\begin{align*}
\sum_{n\geq 1}\sum_{\sigma\in R_{n(\lceil\beta\rceil+1)+1}} p_\alpha^{F(\sigma)} &= \sum_{n\geq 1}\sum_{\sigma\in R_{n(\lceil\beta\rceil+1)+1}} p_\alpha^{\frac{\log p_\beta}{\log p_\alpha}(\ell_\beta(\sigma)-n)}\\
&=\sum_{n\geq 1}\sum_{\sigma\in R_{n(\lceil\beta\rceil+1)+1}} p_\beta^{\ell_\beta(\sigma)-n}\\
&= \sum_{n\geq 1} p_\beta^{-n} \sum_{\sigma\in R_{n(\lceil\beta\rceil+1)+1}} p_\beta^{\ell_\beta(\sigma)}\\
&= \sum_{n\geq 1} p_\beta^{-n} \mu_{p_\beta}\!\left(\llbracket R_{n(\lceil\beta\rceil+1)+1}\rrbracket\right)\\
&\leq \sum_{n\geq 1} p_\beta^{-n} \left(\frac{1}{2}\right)^{n(\lceil\beta\rceil+1)+1}\\
&\leq \sum_{n\geq 1} p_\beta^{-n} \left(\frac{1}{2}\right)^{n(\beta+1)+1}\\
&\leq \sum_{n\geq 1} p_\beta^{-n} p_\beta^{n(\beta+1)+1}&&\text{(since $p_\beta\geq 1/2$)}\\
&= \sum_{n\geq 1} p_\beta^{n\beta+1}=1.&&\text{(by Lemma \ref{lem-kj1})}\\
\end{align*}

Next, the set $\{(\sigma,m): F(\sigma)< m\}$ is clearly c.e., as the function $F$ is computable.  By the minimality of $K^{(\alpha)}$ among all $\alpha$-i.c.m.s, there is some $d\in\omega$ such that for all $\sigma$, if $\sigma\in R_{n(\lceil\beta\rceil+1)+1}$ for some $n\geq 1$, then $K^{(\alpha)}(\sigma)\leq F(\sigma)+d$.  Since $X\in\bigcap_{n\in\omega} \U_n$, it follows that for every $n\geq 1$, $X\in \U_{n(\lceil\beta\rceil+1)+1}$.  For each $n\geq 1$, let $\hat n$ be the least integer such that
\[
\frac{\log p_\beta}{\log p_\alpha}n + d\leq\frac{\log p_\beta}{\log p_\alpha}\hat n.
\]
Then since $X\in \U_{\hat n(\lceil\beta\rceil+1)+1}$, there is some $m$ such that $X\upto m\in R_{\hat n(\lceil\beta\rceil+1)+1}$, so that
\begin{align*}
K^{(\alpha)}(X\upto m)&\leq F(X\upto m)+d\\
&\leq \frac{\log p_\beta}{\log p_\alpha}(\ell_\beta(X\upto m)-\hat n)+d\\
&\leq \frac{\log p_\beta}{\log p_\alpha}(\ell_\beta(X\upto m)-n),\\
\end{align*}
which yields the conclusion.
\end{proof}

One consequence of Theorem \ref{thm-gen-kj} is a characterization of $(\alpha,\beta)$-incompressibility when $\alpha=\beta$.

\begin{cor}\label{cor-kincomp}
For any $X\in 2^\omega$ and computable $\alpha\geq 1$, $X\in\MLR_{\mu_{p_\alpha}}$ if and only if for all $n$,
\[
K^{(\alpha)}(X\upto n)\geq \ell_\alpha(X\upto n)-O(1).
\]
\end{cor}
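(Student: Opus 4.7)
The plan is to obtain this corollary as an immediate specialization of Theorem \ref{thm-gen-kj} to the diagonal case $j=k$. First I would observe that when $j=k$ the conversion factor $\log q_j / \log q_k$ appearing in the definition of generalized $(k,j)$-incompressibility collapses to $1$, and the length function $\ell_j$ coincides with $\ell_k$. Hence the generalized incompressibility condition
\[
K^{(k)}(X\upto n)\geq \frac{\log q_j}{\log q_k}\ell_j(X\upto n)-O(1)
\]
becomes literally $K^{(k)}(X\upto n)\geq \ell_k(X\upto n)-O(1)$, which is precisely the inequality stated in the corollary.

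With this observation in hand, I would invoke Theorem \ref{thm-gen-kj} at $j=k$, which gives the equivalence between $X\in\MLR_{\lambda_k}$ and $X$ being generalized $(k,k)$-incompressible. Combining this equivalence with the first step yields both directions of the corollary with no further argument: the forward direction is the $j=k$ instance of $(\rightarrow)$ in Theorem \ref{thm-gen-kj}, and the reverse direction is the $j=k$ instance of $(\leftarrow)$.

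There is essentially no obstacle here: all the substantive content — the construction of a $\lambda_k$-Martin-L\"of test from excessively $K^{(k)}$-compressible initial segments, and the construction of a $k$-i.c.m.\ from a failed Martin-L\"of test — has already been carried out in the proof of Theorem \ref{thm-gen-kj}. The only point worth flagging, as a conceptual remark rather than a technical step, is that in this diagonal situation the notion of generalized $(k,k)$-incompressibility agrees on the nose with ordinary $(k,k)$-incompressibility from the definition at the start of Section \ref{sec-MLR}, so the corollary provides a clean $K^{(k)}$-based characterization of $\lambda_k$-Martin-L\"of randomness that exactly parallels the classical Levin--Schnorr theorem (and indeed recovers it when $k=1$, since $\lambda_1=\lambda$ and $\ell_1=|\cdot|$).
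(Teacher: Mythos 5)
Your proposal is correct and matches the paper exactly: the paper presents this corollary as an immediate consequence of Theorem \ref{thm-gen-kj} by taking $j=k$, where the conversion factor $\log q_j/\log q_k$ becomes $1$ and $\ell_j$ coincides with $\ell_k$. No further argument is needed, just as you observe.
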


We now conclude this section by showing that there are sequences that are $(\alpha,1)$-incompressible and yet not random with respect to any computable measure.  To do so, we use the following generalization of the above-discussed theorem of Davie's, Theorem \ref{thm-Davie}, which is an immediate consequence of \cite[Theorem 5.2.3]{HoyRoj09} due to Hoyrup and Rojas.  For $p\in(0,1)$, let $\mathcal{K}_c^p=\{X\in\cs\colon(\forall n)K(X\upto n)\geq -\log\mu_p(X\upto n)-c\}$.

\begin{thm}\label{thm-hr}
For any $c\in\omega$ and $\epsilon>0$, we can effectively find $n(c, \epsilon)\in\omega$ such that if $X\in \mathcal{K}^p_c$ then for all $n>n(c,\epsilon)$,
\[
\Biggl|\frac{\#_0(X\upto n)}{n}-p\Biggr|<\epsilon.
\]
\end{thm}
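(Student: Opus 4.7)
The plan is to show that if $X$ has initial segments that are sufficiently incompressible relative to $\mu_p$, then the empirical frequency of $0$'s in those initial segments cannot deviate from $p$ by more than $\epsilon$ beyond an effectively computable length $n(c,\epsilon)$. The strategy is to use the KC theorem to build a prefix-free machine that provides short descriptions of any string whose empirical $0$-frequency deviates by at least $\epsilon$, and then to combine this with the hypothesis $X\in\mathcal{K}^p_c$ to derive an effective upper bound on the lengths at which such a deviation can occur.

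First, for each $n\in\omega$ and rational $\epsilon>0$, define the computable set
\[
B_n^\epsilon=\left\{\sigma\in 2^n:\left|\frac{\#_0(\sigma)}{n}-p\right|\geq\epsilon\right\}.
\]
By Hoeffding's inequality, which applies effectively since $p$ is computable, $\mu_p(\llbracket B_n^\epsilon\rrbracket)\leq 2e^{-2n\epsilon^2}$. Fix a rational $\alpha>0$ with $\alpha<2\epsilon^2/\ln 2$, so that the series $\sum_n 2^{\alpha n}\mu_p(\llbracket B_n^\epsilon\rrbracket)$ converges to a computable real, which can be bounded above by $1$ after applying a computable additive shift $C$.

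Next, for each $n$ and each $\sigma\in B_n^\epsilon$, issue a KC request of the form $(r_\sigma,\sigma)$ with $r_\sigma=\lceil -\log\mu_p(\sigma)-\alpha n\rceil+C$. The Kraft sum of these requests is at most $1$ by construction, so the KC theorem yields a prefix-free machine $M$, and by invariance there is a constant $d$ (computable from $\alpha$, $p$, $C$, and the universal machine) such that $K(\sigma)\leq -\log\mu_p(\sigma)-\alpha n+d$ for every $\sigma\in B_n^\epsilon$. Combining this with the hypothesis $K(X\upto n)\geq -\log\mu_p(X\upto n)-c$ under the assumption $X\upto n\in B_n^\epsilon$ gives $\alpha n\leq c+d$, so that setting $n(c,\epsilon)=\lceil(c+d)/\alpha\rceil$ yields the required effective bound.

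The main obstacle is ensuring that every step is uniformly effective in $c$ and $\epsilon$: Hoeffding's bound must be used quantitatively, the normalization constant $C$ must be computed from $p$ and $\epsilon$, and the coding constant $d$ for the simulation of $M$ by the universal machine must be tracked through the construction. These are all routine but require a careful effective reading of the KC construction. An alternative route that sidesteps explicit tracking of $d$ is to group the sets $B_n^\epsilon$ into a uniform $\mu_p$-Martin-L\"of test and then invoke the converse direction of the Levin-Schnorr theorem for $\mu_p$; either approach delivers the effective bound $n(c,\epsilon)$.
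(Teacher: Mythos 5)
Your proof is correct, but it takes a genuinely different route from the paper. The paper does not give a self-contained argument at all: it observes that the statement is an immediate consequence of Theorem 5.2.3 of Hoyrup and Rojas together with the fact that $(\mathcal{K}^p_c)_{c\in\omega}$ forms a universal $\mu_p$-Martin-L\"of test, and leaves the details to the reader. You instead construct the bound from scratch: Hoeffding's inequality gives an exponentially small $\mu_p$-measure for the deviation sets $B_n^\epsilon$, the geometric slack $\alpha<2\epsilon^2/\ln 2$ lets you fold a term $\alpha n$ into a single KC set whose Kraft sum is controlled by an explicitly computable normalization $C$, and the resulting compression $K(\sigma)\leq -\log\mu_p(\sigma)-\alpha n+d$ on $B_n^\epsilon$ collides with the hypothesis $X\in\mathcal{K}^p_c$ to force $n\leq(c+d)/\alpha$. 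All the constants ($C$ from the explicit geometric bound, $d$ from the effective index of the KC machine and the fixed universal machine) are effectively computable, so the bound $n(c,\epsilon)$ is effective as required; the only cosmetic caveats are that one should work with a rational $\epsilon'<\epsilon$ and replace the ceiling of the computable real $-\log\mu_p(\sigma)-\alpha n$ by a computable integer upper bound within a fixed additive constant, both of which are routine and which you flag. What your approach buys is self-containedness and an explicit, quantitative modulus $n(c,\epsilon)=\lceil(c+d)/\alpha\rceil$ with $\alpha$ depending only on $\epsilon$; what the paper's citation-based route buys is brevity and immediate access to the more general layerwise-computability machinery of Hoyrup--Rojas, which covers ergodic averages well beyond the law of large numbers. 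Your closing remark about packaging the $B_n^\epsilon$ into a $\mu_p$-Martin-L\"of test and invoking the Levin--Schnorr converse is in fact essentially the route the paper gestures at.
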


The proof of Theorem \ref{thm-hr} also uses the fact that the sequence $(\mathcal{K}_c^p)_{c\in\omega}$ defines a universal $\mu_p$-Martin-L\"of test.  That this fact can be used in tandem with \cite[Theorem 5.2.3]{HoyRoj09} is noted in Section 5.2.4 of \cite{HoyRoj09}.  As there are no new ideas involved in the proof, we leave the details to the reader.

Just as Theorem \ref{thm-Davie} applies to sufficiently long incompressible strings, so too does Theorem \ref{thm-hr} apply to sufficiently long $k$-incompressible strings.  We first need an auxiliary lemma.

\begin{lem}\label{lem-lk-incomp}
For computable $\alpha\geq 1$, there is some computable function $f$ such that for every $e\in\omega$ and every $\sigma\in\str$, if
$K^{(\alpha)}(\sigma)\geq \ell_\alpha(\sigma)-e$ then $K(\sigma)\geq-\log\mu_{p_\alpha}(\sigma)-f(e)$.
\end{lem}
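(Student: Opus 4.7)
The plan is to convert the hypothesis $K^{(k)}(\sigma)\geq \ell_k(\sigma)-e$ into a lower bound on $K(\sigma)$ by exhibiting a $k$-information content measure derived from $K$ and applying the minimality of $K^{(k)}$ among $k$-i.c.m.'s. The key identity, immediate from $\lambda_k(\sigma)=q_k^{-\ell_k(\sigma)}$, is $-\log\lambda_k(\sigma)=\ell_k(\sigma)\log q_k$, so the target conclusion reads $K(\sigma)\geq\ell_k(\sigma)\log q_k-f(e)$, and what I really need is an inequality of the form $K^{(k)}(\sigma)\leq K(\sigma)/\log q_k + O(1)$.

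To obtain this inequality, I will define $F:\str\to\omega$ by $F(\sigma)=\lceil K(\sigma)/\log q_k\rceil$ and check that $F$ is a $k$-i.c.m. The c.e.\ condition on $\{(\sigma,m):F(\sigma)\leq m-k+1\}$ holds because $K$ is upper semi-computable and $\log q_k$ is a computable real, so the inequality $\lceil K(\sigma)/\log q_k\rceil \leq m-k+1$ can be enumerated. For the summation bound I use the change-of-base identity $q_k^{1/\log q_k}=2$ (already exploited in the proof of Theorem \ref{thm-K-upper-bound}):
\[
\sum_{\sigma\in\str}q_k^{-F(\sigma)}\leq\sum_{\sigma\in\str}q_k^{-K(\sigma)/\log q_k}=\sum_{\sigma\in\str}2^{-K(\sigma)}\leq 1,
\]
where the final step is Kraft's inequality for the universal prefix-free machine $\univ$. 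Hence $F$ is a $k$-i.c.m.

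By minimality of $K^{(k)}$, there is some $c_k\in\omega$ such that $K^{(k)}(\sigma)\leq K(\sigma)/\log q_k+c_k$ for every $\sigma$. Rearranging and inserting the hypothesis $K^{(k)}(\sigma)\geq\ell_k(\sigma)-e$ yields
\[
K(\sigma)\geq (K^{(k)}(\sigma)-c_k)\log q_k\geq (\ell_k(\sigma)-e-c_k)\log q_k=-\log\lambda_k(\sigma)-(e+c_k)\log q_k,
\]
so $f(e)=\lceil(e+c_k)\log q_k\rceil$ works and is computable because $\log q_k$ is. There is no real obstacle here; the proof is a routine application of the $k$-i.c.m.\ machinery developed earlier in the paper, and the only design choice is selecting $F$ so that it performs the conversion between the $k$-length and standard-length scales.
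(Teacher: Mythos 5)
Your proof is correct, but it takes a genuinely different route from the one in the paper. The paper derives the lemma from the Levin--Schnorr theorem: the sets $R_n=\{\sigma\in\str: K(\sigma)<-\log\lambda_k(\sigma)-n\}$ form a universal $\lambda_k$-Martin-L{\"o}f test, and the $(\leftarrow)$ direction of the proof of Theorem \ref{thm-gen-kj} (applied with $j=k$) converts membership in $R_{n(k+1)+1}$ into the bound $K^{(k)}(\sigma)\leq\ell_k(\sigma)-n+d$; the lemma is then the contrapositive, with $f(e)=(e+d)(k+1)+1$. You instead bypass the test machinery entirely and prove the pointwise inequality $K^{(k)}(\sigma)\leq K(\sigma)/\log q_k+O(1)$ by exhibiting $F(\sigma)=\lceil K(\sigma)/\log q_k\rceil$ as a $k$-i.c.m.\ and invoking minimality of $K^{(k)}$, after which the lemma is pure algebra. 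Both arguments are sound. Yours is more self-contained, gives a better constant (slope $\log q_k\leq 1$ in $e$ rather than $k+1$), and the intermediate inequality $K^{(k)}\leq K/\log q_k+O(1)$ is a clean statement of independent interest --- a direct analogue of Theorem \ref{thm-K-upper-bound} with $K$ in place of $\ell_j(\sigma)+K^{(j)}(\ell_j(\sigma))$; the paper's version buys economy by recycling the already-proved direction of Theorem \ref{thm-gen-kj}. Two small points: in your c.e.\ verification, the non-strict inequality $K(\sigma)\leq(m-k+1)\log q_k$ is only evidently $\Sigma^0_1$ because $\log q_k$ is either equal to $1$ or irrational (so the inequality may be taken strict); this is exactly the caveat the paper records in the proof of Theorem \ref{thm-gen-kj}, and you should record it too. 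And you may replace $f(e)=\lceil(e+c_k)\log q_k\rceil$ by the simpler $f(e)=e+c_k$, since $\log q_k\leq 1$, which sidesteps any worry about computing the ceiling of a computable real.
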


\begin{proof}
Note that the collection $\{R_n\}_{n\in\omega}$  defined by
\[
R_n=\left\{\sigma\in\str:K(\sigma)<-\log\mu_{p_\alpha}(\sigma)-n\right\}
\]
defines a universal $\mu_{p_\alpha}$-Martin-L\"of test by the Levin-Schnorr theorem.   Applying the $(\leftarrow)$ direction of the proof of Theorem \ref{thm-gen-kj} to $\{R_n\}_{n\in\omega}$ yields a $d\in\omega$ such that for every $\sigma\in R_{n(\lceil\alpha\rceil+1)+1}$, $K^{(\alpha)}(\sigma)\leq\ell_\alpha(\sigma)-n+d$. Taking the converse, we have for each $e\in\omega$ and every $\sigma$, $K^{(\alpha)}(\sigma)\geq \ell_\alpha(\sigma)-e$ implies $K(\sigma)\geq -\log\mu_{p_\alpha}(\sigma)-((e+d)(\lceil\alpha\rceil+1)+1)$.

Setting $f(e)=(e+d)(\lceil\alpha\rceil+1)+1$ yields the desired function.
\end{proof}

We now state and prove the finitary version of Theorem \ref{thm-hr}. To do so, we need to generalize the notion of a ``shortest description" to $\alpha$-complexity.  For each $\sigma$, let $\sigma^{*(\alpha)}$ be the length-lexicographically least string such that $U(\sigma^{*(\alpha)})=\sigma$ and $K^{(\alpha)}(\sigma)=\ell_\alpha(\sigma^{*(\alpha)})$.

\begin{lem}\label{lem-davie2}
For computable $\alpha\geq 1$ and $\epsilon\in(0,1)$, there is some $d$ and $n(d,\epsilon)$ such that for all strings $\sigma$ such that $|\sigma^{*(\alpha)}|\geq n(d,\epsilon)$, 
\[
\Biggl|\frac{\#_0(\sigma^{*(\alpha)})}{|\sigma^{*(\alpha)}|}-p_\alpha\Biggr|<\epsilon.
\]
\end{lem}

\begin{proof}  Fix $\alpha$ and $\epsilon$ as above.  As in the proof of Lemma \ref{lem-davie}, let $c$ be the coding constant of $U\circ U$.  One can easily verify that for each $\sigma\in\str$, $K^{(\alpha)}(\sigma^{*(\alpha)})\geq \ell_\alpha(\sigma^{*(\alpha)})-c$.  By Lemma \ref{lem-lk-incomp}, it follows that $K(\sigma^{*(\alpha)})\geq -\log\mu_{p_\alpha}(\sigma^{*(\alpha)})-f(c)$.  Setting $d=f(c)$ and applying Theorem \ref{thm-hr} to any string $\sigma$ such that $|\sigma^{*(\alpha)}|\geq n(d,\epsilon)$ yields the desired conclusion.
\end{proof}

\begin{thm}\label{thm-k1}
For computable $\alpha\geq 1$ and $\epsilon\in(0,1-p_\alpha)$, if for all $n$,
\[
K(X\upto n)\geq \frac{n}{1+(\alpha-1)(1-p_\alpha-\epsilon)}-O(1),
\]
then $X$ is $(\alpha,1)$-incompressible.
\end{thm}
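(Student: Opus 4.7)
The plan is to argue directly, bypassing any contradiction setup, by analyzing the shortest $k$-description of each initial segment. Fix $X$ satisfying the hypothesis, let $\sigma = X\upto n$, and let $\tau = \sigma^{*(k)}$, so by definition $U(\tau) = \sigma$ and $\ell_k(\tau) = K^{(k)}(\sigma)$. Since any binary description gives an upper bound on plain prefix-free complexity, $K(\sigma)\leq |\tau|$, and thus the hypothesis yields
\[
|\tau|\;\geq\; K(\sigma)\;\geq\; \frac{n}{1+(k-1)(1-p_k-\epsilon)}-O(1).
\]
In particular $|\tau|\to\infty$ with $n$, so $|\tau|\geq n(d,\epsilon)$ eventually, where $d$ is the constant supplied by Lemma \ref{lem-davie2}.

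Applying Lemma \ref{lem-davie2} to $\tau=\sigma^{*(k)}$ for all sufficiently large $n$ gives $\#_0(\tau)\leq(p_k+\epsilon)|\tau|$, and hence
\[
\#_1(\tau)\;=\;|\tau|-\#_0(\tau)\;\geq\;(1-p_k-\epsilon)|\tau|,
\]
which is strictly positive because $\epsilon\in(0,1-p_k)$. Plugging this into the identity $\ell_k(\tau)=|\tau|+(k-1)\#_1(\tau)$ yields
\[
K^{(k)}(\sigma)\;=\;\ell_k(\tau)\;\geq\;|\tau|\bigl(1+(k-1)(1-p_k-\epsilon)\bigr).
\]

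Combining this with the lower bound on $|\tau|$ derived from the hypothesis, the factors $1+(k-1)(1-p_k-\epsilon)$ cancel, leaving
\[
K^{(k)}(X\upto n)\;\geq\;n-O(1)\;=\;\ell_1(X\upto n)-O(1),
\]
which is exactly $(k,1)$-incompressibility. The main obstacle is simply verifying that $|\sigma^{*(k)}|$ grows large enough to invoke the Davie-style Lemma \ref{lem-davie2}; but this is automatic, since the hypothesis forces $|\sigma^{*(k)}|\geq K(X\upto n)$ to grow at least linearly in $n$. Everything else is essentially an algebraic matching of constants designed so that the Davie-ratio correction factor $1+(k-1)(1-p_k-\epsilon)$ exactly undoes the denominator appearing in the hypothesis.
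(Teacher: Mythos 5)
Your proposal is correct and follows essentially the same route as the paper's proof: both take $\tau=(X\upto n)^{*(k)}$, use the hypothesis to bound $|\tau|$ from below via $K(X\upto n)\leq|\tau|$, invoke Lemma \ref{lem-davie2} to get $\#_1(\tau)\geq(1-p_k-\epsilon)|\tau|$, and then cancel the factor $1+(k-1)(1-p_k-\epsilon)$ in $K^{(k)}(X\upto n)=|\tau|+(k-1)\#_1(\tau)$. Your explicit justification that $|\tau|$ grows linearly (so the lemma's hypothesis $|\sigma^{*(k)}|\geq n(d,\epsilon)$ is eventually met) is in fact slightly more careful than the paper's phrasing of that step.
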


\begin{proof}
Given $X$ as in the hypothesis, let $d$ and $n(d,\epsilon)$ be as in Lemma \ref{lem-davie2}.  Then for all $n$ such that $K^{(\alpha)}(X\upto n)\geq n(d,\epsilon)$ and for all $\tau$ such that $U(\tau)=X\upto n$, we have 
\begin{equation}\label{eq-ugly-ineq}
|\tau|\geq  \frac{n}{1+(\alpha-1)(1-p_\alpha-\epsilon)}-j
\end{equation}
for some $j\in\omega$. In particular, if $\tau=(X\upto n)^{*(\alpha)}$, then by Lemma \ref{lem-davie2}, we have
\[
\Biggl|\frac{\#_0(\tau)}{|\tau|}-p_\alpha\Biggr|<\epsilon.
\]
This implies that 
\[
\Biggl|\frac{\#_1(\tau)}{|\tau|}-(1-p_\alpha)\Biggr|<\epsilon,
\]
and hence that
\begin{equation}\label{eq-kj3}
\#_1(\tau)\geq (1-p_\alpha-\epsilon)|\tau|.
\end{equation}
By the choice of $\tau$, we have
\[
K^{(\alpha)}(X\upto n)=|\tau|+(\alpha-1)\#_1(\tau),
\]
which when combined with (\ref{eq-kj3}) yields
\begin{equation}\label{eq-ugh}
K^{(\alpha)}(X\upto n)\geq (1+(\alpha-1)(1-p_\alpha-\epsilon))|\tau|.
\end{equation}
Combining (\ref{eq-ugly-ineq}) and (\ref{eq-ugh}) gives
\[
K^{(\alpha)}(X\upto n)\geq n-O(1).
\]

\end{proof}

\begin{cor}
For computable $\alpha>1$, there is an $(\alpha,1)$-incompressible sequence that is not Martin-L\"of random with respect to any computable measure.
\end{cor}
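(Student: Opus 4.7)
The plan is to combine Theorem~\ref{thm-k1} with a known existence theorem for sequences of high effective Hausdorff dimension that fail to be Martin-L\"of random with respect to any computable measure. First I would observe that for $k>1$ and any $\epsilon\in(0,1-p_k)$, the constant
\[
\alpha_{k,\epsilon}:=\frac{1}{1+(k-1)(1-p_k-\epsilon)}
\]
is strictly less than $1$, since $(k-1)(1-p_k-\epsilon)>0$. By Theorem~\ref{thm-k1}, the corollary thereby reduces to producing a sequence $X\in 2^\omega$ satisfying $K(X\upto n)\geq \alpha_{k,\epsilon}\cdot n-O(1)$ for all $n$ that is not Martin-L\"of random with respect to any computable measure.

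For the second step, I would invoke the work of Reimann and Slaman on the class NCR of sequences that are not Martin-L\"of random with respect to any computable continuous probability measure: results in that program yield sequences in NCR whose effective Hausdorff dimension can be taken arbitrarily close to $1$, and in particular strictly greater than $\alpha_{k,\epsilon}$. For such an $X$, the liminf definition of effective Hausdorff dimension immediately gives $K(X\upto n)\geq \alpha_{k,\epsilon}\cdot n-O(1)$ for all $n$, so Theorem~\ref{thm-k1} yields that $X$ is $(k,1)$-incompressible. A minor technical point is that NCR is defined in terms of continuous measures whereas the corollary mentions arbitrary computable measures; however, $X$ has positive effective Hausdorff dimension and is therefore non-computable, so $X$ cannot be an atom of any computable discrete measure. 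Thus the two notions coincide on such $X$, and the corollary follows.

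The main obstacle I foresee is simply locating the precise high-dimension NCR existence statement in the literature at the needed dimension level (as opposed to a mere positive-dimension NCR existence statement); once that is pinned down, no genuinely new ideas beyond Theorem~\ref{thm-k1} itself are required. A fallback, should a clean reference prove hard to extract, would be to directly construct $X$ as a suitably thin mixture of $\lambda_k$-random sequences with varying parameters so that the limit frequencies required by any computable measure fail to exist, while still verifying the linear complexity lower bound needed to apply Theorem~\ref{thm-k1}.
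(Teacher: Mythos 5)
Your first step --- reducing the corollary via Theorem~\ref{thm-k1} to producing a sequence $X$ with $K(X\upto n)\geq \alpha n - O(1)$ for some $\alpha<1$ close enough to $1$, where $X$ is not random for any computable measure --- matches the paper exactly. The gap is in the existence step, which is the actual content of the proof. The Reimann--Slaman NCR program will not supply the sequence you need: their class NCR is defined with respect to \emph{arbitrary} continuous measures, and Reimann's own theorem (that every sequence of positive effective Hausdorff dimension is non-trivially random for \emph{some} continuous measure) shows that NCR in that sense contains no sequences of positive effective dimension at all, let alone dimension near $1$. If you instead mean ``not random for any \emph{computable} continuous measure,'' then high-dimension examples do exist, but that is not a theorem of the Reimann--Slaman program; it is precisely Miller's negative solution to the dimension extraction problem \cite{Mil11}: for every $\alpha\in(0,1)$ there is an $X$ with $K(X\upto n)\geq \alpha n-O(1)$ that computes no sequence $Y$ with $K(Y\upto n)\geq\beta n-O(1)$ for any $\beta>\alpha$, hence computes no Martin-L\"of random. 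Combining this with the Zvonkin--Levin/Kautz theorem (every non-atom that is random for a computable measure computes a Martin-L\"of random) is exactly how the paper concludes, and it also sidesteps your atomic-versus-continuous detour, whose claim that the two notions ``coincide'' on non-computable $X$ is itself unjustified as stated (you would need to argue that a non-atom random for a computable atomic measure is random for a computable \emph{continuous} one, which is not immediate).

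Your fallback is also not sound as described: failure of limiting frequencies does not imply non-randomness with respect to every computable measure (plenty of computable non-i.i.d.\ measures have random elements without limiting frequencies, and a computably structured mixture of $\lambda_{k_i}$-random blocks may well be random for a suitably built computable measure). Ruling out randomness for \emph{all} computable measures essentially forces you through the ``computes no Martin-L\"of random'' route, i.e., back to Miller's theorem.
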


\begin{proof}
Choose $\epsilon$ such that $p_\alpha+\epsilon<1$ and let $\delta>0$ satisfy
$\delta\leq (\alpha-1)(1-p_\alpha-\epsilon)$.  By Miller \cite{Mil11}, for every $q\in (0,1)$, there is some $X\in\cs$ such that (i) $K(X\upto n)\geq q n-O(1)$ and (ii) $X$ does not compute any sequence $Y$ satisfying $K(Y\upto n)\geq q'n-O(1)$ for $q'>q$ in $(0,1]$.  Let $q=\frac{1}{1+\delta}$, and let $X$ satisfy (i) and (ii) for this choice of $q$.  First, we have
\[
K(X\upto n)\geq q n-O(1)=\frac{n}{1+\delta}-O(1)\geq \frac{n}{1+(\alpha-1)(1-p_\alpha-\epsilon)}, 
\]
and thus by Theorem \ref{thm-k1}, $X$ is $(\alpha,1)$-incompressible.  Next, it follows from (ii) that $X$ does not compute a Martin-L\"of random sequence.  Since every sequence that is random with respect some computable measure must compute a Martin-L\"of random sequence (by a result due independently to Zvonkin/Levin \cite{ZvoLev70} and Kautz \cite{Kau91}), it follows that $X$ is not random with respect to any computable measure.

\end{proof}

Our analysis of $(\alpha,\beta)$-incompressibility for computable $\alpha$ and $\beta$ is not complete, as the following is still open.

\begin{que}
For computable $\alpha\geq 1$ and $\beta>1$ with $\alpha> \beta$, is there an $(\alpha,\beta)$-incompressible sequence that is not Martin-L\"of random with respect to any computable measure?
\end{que}


%
%

\section{Effective dimension and generalized length functions}\label{sec-dim}

In this final section we consider effective Hausdorff dimension, effective packing dimension, and entropy in the context of generalized length functions. As stated at the end of Section \ref{sec-background}, it follows from the effective version of the Shannon-McMillan-Breiman theorem established by Hoyrup \cite{Hoy12} that for 
each computable Bernoulli measure $\mu_p$  and each $X\in \MLR_{\mu_p}$,
\[
\lim_{n\rightarrow\infty}\frac{K(X\upto n)}{n}=-p\log(p)-(1-p)\log(1-p)=h(p).
\]
We obtain the following generalization:

%
%

\begin{thm}\label{thm-j-k-entropy}
For every computable $\alpha,\beta\geq 1$, and all $X\in\MLR_{\mu_{p_\beta}}$,
\[
\lim_{n\to\infty} \frac{K^{(\alpha)}(X\upto n)}{n}=-\frac{1}{\log p_\alpha} h(p_\beta).
\]
\end{thm}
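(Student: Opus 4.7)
The plan is to sandwich $K^{(k)}(X\upto n)/n$ between two expressions that both converge to $-h(p_j)/\log p_k$, using the results already established in the paper. The lower bound will come straight from Theorem \ref{thm-gen-kj} (in the form for $\lambda_j$-random sequences): since $X\in\MLR_{\lambda_j}$, we have
\[
K^{(k)}(X\upto n)\geq \frac{\log q_j}{\log q_k}\ell_j(X\upto n)-O(1).
\]
The upper bound will come from Theorem \ref{thm-K-upper-bound} applied to $\sigma=X\upto n$, which gives
\[
K^{(k)}(X\upto n)\leq \frac{\log q_j}{\log q_k}\bigl(\ell_j(X\upto n)+K^{(j)}(\ell_j(X\upto n))\bigr)+O(1).
\]

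Dividing both inequalities by $n$, the problem reduces to two asymptotic facts. First, I need $\ell_j(X\upto n)/n\to p_j+j(1-p_j)$. For this I will apply the effective law of large numbers (Theorem \ref{thm-hr}) with $p=p_j$: since $X\in\MLR_{\lambda_j}$, the Levin--Schnorr theorem places $X$ in some $\mathcal{K}^{p_j}_c$, and then $\#_0(X\upto n)/n\to p_j$ and $\#_1(X\upto n)/n\to 1-p_j$, so $\ell_j(X\upto n)/n=\#_0(X\upto n)/n+j\cdot\#_1(X\upto n)/n$ converges as claimed. Second, the error term $K^{(j)}(\ell_j(X\upto n))/n$ vanishes: by Lemma \ref{lem-K-inequality} we have $K^{(j)}(m)=O(\log m)$, and $\ell_j(X\upto n)\leq jn$, so $K^{(j)}(\ell_j(X\upto n))=O(\log n)=o(n)$.

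Combining these, both bounds on $K^{(k)}(X\upto n)/n$ tend to $\frac{\log q_j}{\log q_k}(p_j+j(1-p_j))$, forcing the limit to exist and equal this value. The remaining task is the algebraic identification of this value with $-h(p_j)/\log p_k$, which is the only step that uses something special about the $p_k$-parameters. Using $\log q_j/\log q_k=\log p_j/\log p_k$ and the defining identity $p_j^j=1-p_j$, one computes
\[
(p_j+j(1-p_j))\log p_j = p_j\log p_j+(1-p_j)\log p_j^{\,j}=p_j\log p_j+(1-p_j)\log(1-p_j)=-h(p_j),
\]
so the target constant is indeed $-h(p_j)/\log p_k$.

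The main obstacle I anticipate is simply bookkeeping: carefully invoking the finite version of the effective law of large numbers with a uniform constant $c$ coming from universality of the Martin-L\"of test, and verifying that the $O(1)$ and $o(n)$ error terms on both sides really do wash out uniformly so that one gets a genuine two-sided limit rather than only $\liminf$/$\limsup$ bounds. Beyond that, every ingredient (Theorems \ref{thm-gen-kj}, \ref{thm-K-upper-bound}, \ref{thm-hr} and Lemma \ref{lem-K-inequality}) is already available, and the identity $p_j^j=1-p_j$ does all the non-trivial work in matching the algebraic limit with the entropy $h(p_j)$.
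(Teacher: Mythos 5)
Your proposal is correct and follows essentially the same route as the paper's proof: lower bound from Theorem \ref{thm-gen-kj}, upper bound from Theorem \ref{thm-K-upper-bound} with the $K^{(j)}(\ell_j(X\upto n))=O(\log n)$ error term killed by Lemma \ref{lem-K-inequality} and $\ell_j(X\upto n)\leq jn$, the law of large numbers giving $\ell_j(X\upto n)/n\to p_j+j(1-p_j)$, and the identity $p_j^j=1-p_j$ converting this constant into $-h(p_j)/\log p_k$. The only cosmetic difference is that you derive the law of large numbers for $X$ via Theorem \ref{thm-hr} and Levin--Schnorr, whereas the paper simply cites it as a known fact about Bernoulli-random sequences.
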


\begin{proof}
Fix computable $\alpha,\beta\geq 1$.  First, we will use the well-known fact that law of large numbers holds for all sequences that are random with respect to a Bernoulli measure; that is, given a computable real $p\in(0,1)$ and the Bernoulli $p$-measure $\mu_p$, if $X\in\MLR_{\mu_p}$ then
\[
\lim_{n\to\infty} \frac{\#_0(X\upto n)}{n}=p\text{  and  }\lim_{n\to\infty} \frac{\#_1(X\upto n)}{n}=1-p.
\]

\noindent With this in mind, let $X\in\MLR_{\mu_{p_\beta}}$.  We first show that the effective $\alpha$-packing and effective $\alpha$-Hausdorff dimensions of $X$ are equal.  We handle effective $\alpha$-packing dimension first.  By Theorem \ref{thm-gen-kj}, since $X\in\MLR_{\mu_{p_\beta}}$, there is some $c\in\omega$ such that for every $n\in\omega$,

\[
K^{(\alpha)}(X\upto n)\geq \frac{\log p_\beta}{\log p_\alpha}(\ell_\beta(X\upto n)-c).
\]

\noindent Hence\[
\liminf_{n\to\infty} \frac{K^{(\alpha)}(X\upto n)}{n}\geq\liminf_{n\to\infty}\frac{\log p_\beta}{\log p_\alpha}\left(\frac{\ell_\beta(X\upto n)-c}{n}\right)=\lim_{n\to\infty}\frac{\log p_\beta}{\log p_\alpha}\left(\frac{\ell_\beta(X\upto n)-c}{n}\right),
\]
assuming the latter limit exists.  We verify that it does as follows:
\begin{align*}
\lim_{n\to\infty}\frac{\log p_\beta}{\log p_\alpha}\left(\frac{\ell_\beta(X\upto n)-c}{n}\right) &=\frac{\log p_\beta}{\log p_\alpha}\cdot\lim_{n\to\infty}\frac{\#_0(X\upto n)+\beta\cdot\#_1(X\upto n)-c}{n}\\
&=\frac{\log p_\beta}{\log p_\alpha}\left(p_\beta+\beta(1-p_\beta)\right),
\end{align*}
where the latter equality follows from the law of large numbers.
Next, by Theorem \ref{thm-K-upper-bound}, let $c'\in\omega$ be such that for every $n\in\omega$,
\begin{equation}\label{eq-dim1}
K^{(\alpha)}(X\upto n)\leq \frac{\log p_\beta}{\log p_\alpha}\left(\ell_\beta(X\upto n)+K^{(\beta)}(\lfloor\ell_\beta(X\upto n)\rfloor)\right)+c'.
\end{equation}
Lemma \ref{lem-K-inequality} gives us that there is some $c''\in\omega$ such that for every $n\in\omega$,
\begin{equation}\label{eq-dim2}
K^{(\beta)}(\lfloor\ell_\beta(X\upto n)\rfloor)\leq (\beta+1)\log (\lfloor\ell_\beta(X\upto n)\rfloor)+c''.
\end{equation}
Lastly, for all $n\in\omega$, 
\begin{equation}\label{eq-dim3}
\lfloor\ell_\beta(X\upto n)\rfloor\leq\ell_\beta(X\upto n)\leq \beta n,
\end{equation}
since we can have at most $n$ 0's in any string of length $n$.
Thus we have
\begin{align*}
\limsup_{n\to\infty} \frac{K^{(\alpha)}(X\upto n)}{n} &\leq \limsup_{n\to\infty}\left[ \frac{\log p_\beta}{\log p_\alpha}\left(\frac{\ell_\beta(X\upto n)+K^{(\beta)}(\lfloor\ell_\beta(X\upto n)\rfloor)}{n}\right)+\frac{c'}{n}\right]&&\text{by (\ref{eq-dim1})}\\
&\leq \limsup_{n\to\infty} \left[\frac{\log p_\beta}{\log p_\alpha}\left(\frac{\ell_\beta(X\upto n)+(\beta+1)\log(\lfloor\ell_\beta(X\upto n)\rfloor)+c''}{n}\right)+\frac{c'}{n}\right]&&\text{by (\ref{eq-dim2})}\\
&\leq \limsup_{n\to\infty} \left[\frac{\log p_\beta}{\log p_\alpha}\left(\frac{\ell_\beta(X\upto n)+(\beta+1)\log \beta n+c''}{n}\right)+\frac{c'}{n}\right]&&\text{by (\ref{eq-dim3})}\\
&=\lim_{n\to\infty} \left[\frac{\log p_\beta}{\log p_\alpha}\left(\frac{\ell_\beta(X\upto n)+(\beta+1)\log \beta n+c''}{n}\right)+\frac{c'}{n}\right],\\
\end{align*}
again assuming that the latter limit exists.  Now, following similar steps as those taken with the first limit, we see that
\[
\begin{split}
&\lim_{n\to\infty} \left[\frac{\log p_\beta}{\log p_\alpha} \left(\frac{\ell_\beta(X\upto n)+(\beta+1)\log \beta n+c''}{n}\right)+\frac{c'}{n}\right]\\
 &=\frac{\log p_\beta}{\log p_\alpha}\left[\lim_{n\to\infty}\frac{\ell_\beta(X\upto n)}{n}+\lim_{n\to\infty}\frac{(\beta+1)\log \beta n}{n}+\lim_{n\to\infty}\frac{c''}{n}\right]+\lim_{n\to\infty}\frac{c'}{n}\\
&=\frac{\log p_\beta}{\log p_\alpha}(p_\beta+\beta(1-p_\beta)).\\
\end{split}
\]
 It follows that
\[
\limsup_{n\to\infty} \frac{K^{(\alpha)}(X\upto n)}{n}=\liminf_{n\to\infty} \frac{K^{(\alpha)}(X\upto n)}{n}=\lim_{n\to\infty} \frac{K^{(\alpha)}(X\upto n)}{n}=\frac{\log p_\beta}{\log p_\alpha}(p_\beta+\beta(1-p_\beta)),
\]
as desired.  It now remains to show that
\[
\lim_{n\to\infty} \frac{K^{(\alpha)}(X\upto n)}{n}=-\frac{1}{\log p_\alpha}h(p_\beta).
\]
This is straightforward:
\[
\begin{split}
h(p_\beta) &= -p_\beta\log p_\beta-(1-p_\beta)\log(1-p_\beta)= -p_\beta\log p_\beta-(1-p_\beta)\log p^\beta_\beta\\
&= -p_\beta\log p_\beta-\beta(1-p_\beta)\log p_\beta= -(\log p_\beta)(p_\beta+\beta(1-p_\beta)).
\end{split}
\]
The full result immediately follows.
\end{proof}

We conclude by noting that $\lim_{\alpha\rightarrow\infty}-1/\log(p_\alpha) =\lim_{p_\alpha\rightarrow 1}-1/\log(p_\alpha)=\infty$, and hence the $\alpha$-dimension of a sequence does not have an upper bound of 1, unlike the case for $\alpha=1$.

%
%
%

\bibliographystyle{alpha}
\bibliography{genKC}

\end{document}